\documentclass[reqno,10pt]{amsart}

\usepackage{amsaddr}
\usepackage{a4wide}
\usepackage[utf8]{inputenc}
\usepackage{ae,aecompl}
\usepackage[english]{babel}
\usepackage{enumerate}
\usepackage{latexsym}
\usepackage{amssymb}
\usepackage{amsfonts}
\usepackage{amsmath}
\usepackage{amsthm}
\usepackage{amsopn}
\usepackage{mathtools}
\usepackage{graphicx}
\usepackage{stmaryrd}  % Für den Widerspruchsblitz
\usepackage{braket}    % Für die Mengenklammern
\usepackage{dsfont}
\usepackage{tikz}
\usepackage{enumitem}
\setlist{leftmargin=*}

\newtheorem{satz}{Theorem}[section]
\newtheorem{defi}[satz]{Defintion}
\newtheorem{bem}[satz]{Remark}

\newtheorem{prop}[satz]{Proposition} 
\newtheorem{cor}[satz]{Corollary}
\newtheorem{ass}[satz]{Assumption}
\numberwithin{equation}{section}

\newcommand{\erf}{\operatorname{erf}}

\newcommand{\sgn}{\operatorname{sgn}}
\newcommand{\AC}{\operatorname{AC}}
\newcommand{\Arg}{\operatorname{Arg}}
\renewcommand{\Re}{\operatorname{Re}}
\renewcommand{\Im}{\operatorname{Im}}

\newcommand{\interior}{\operatorname{int}}

\newcommand{\vvect}[2]{\left(\begin{array}{c} #1 \\ #2 \end{array}\right)}

\newcommand{\mmatrix}[4]{\left(\begin{array}{cc} #1 & #2 \\ #3 & #4 \end{array}\right)}

\title[Time evolution of superoscillations for the Schrödinger equation on $\mathbb{R}\setminus\{0\}$]{Time evolution of superoscillations \\ for the Schrödinger equation on $\mathbb{R}\setminus\{0\}$}
\author{Peter Schlosser}
\address{Graz University of Technology}

\begin{document}
 
\parindent 0pt

\begin{abstract}
In the context of quantum mechanics superoscillations, or the more general supershifts, appear as initial conditions of the time dependent Schrödinger equation. Already in \cite{ABCS21_2} a unified approach was developed, which yields time persistence of the supershift property under certain holomorphicity and growth assumptions on the corresponding Green's function. While that theory considers the Schrödinger equation on the whole real line $\mathbb{R}$, this paper takes the natural next step and considers $\mathbb{R}\setminus\{0\}$ instead, and allow boundary conditions at $x=0^\pm$ in addition. In particular the singular $\frac{1}{x^2}$-potential as well as the very important $\delta$ and $\delta'$ distributional potentials are covered.
\end{abstract}

\maketitle

\section{Introduction}

Superoscillations are functions with the paradoxical behaviour to (locally) oscillate faster than their largest Fourier component. The standard example which illustrates this behaviour is the sequence of functions
\begin{equation}\label{Eq_Example}
F_n(x)=\Big(\cos\Big(\frac{x}{n}\Big)+ik\sin\Big(\frac{x}{n}\Big)\Big)^n=\sum\limits_{j=0}^nC_je^{ik_jx},\qquad x\in\mathbb{R},
\end{equation}
with coefficients
\begin{equation}\label{Eq_Example_coefficients}
C_j=\binom{n}{m}\Big(\frac{1+k}{2}\Big)^{n-j}\Big(\frac{1-k}{2}\Big)^j\qquad\text{and}\qquad k_j=1-\frac{2j}{n},
\end{equation}
where $n\in\mathbb{N}_0$ and $k\in\mathbb{R}\setminus[-1,1]$. In particular, that $F$ is a certain linear combination of plane waves with frequencies $k_j\in[-1,1]$. The superocillatory behaviour now comes from the fact, that
\begin{equation}\label{Eq_Example_convergence_formal}
\lim\limits_{n\rightarrow\infty}F_n(x)=e^{ikx},\qquad x\in\mathbb{R},
\end{equation}
converges to a plane wave with frequency $|k|>1$. Note, that the convergence \eqref{Eq_Example_convergence_formal} is understood in the space $\mathcal{A}_1(\mathbb{C})$, see \cite[Lemma 2.4]{CSSY19_1} and Defintion \ref{defi_Ap}, which in particular implies uniform convergence on every compact subseteq of $\mathbb{R}$. What happens is an almost destructive interference of the plane waves $C_je^{ik_jx}$ with small frequencies $k_j\in[-1,1]$ but large amplitudes $C_j\sim|k|^n$, which leaves a remainder of the form $e^{ikx}$ with the small amplitude $1$ but high frequency $|k|>1$.

\medskip

In connection with quantum mechanics, these functions mainly come as the initial condition of the time dependent Schrödinger equation
\begin{subequations}
\begin{align}
i\frac{\partial}{\partial t}\Psi(t,x)&=\Big(-\frac{\partial^2}{\partial x^2}+V(t,x)\Big)\Psi(t,x), && t\in(0,T),\,x\in\mathbb{R}, \label{Eq_Schroedinger_R} \\
\Psi(0,x)&=F(x), && x\in\mathbb{R}. \label{Eq_Initial_R}
\end{align}
\end{subequations}
for some potential $V:(0,T)\times\mathbb{R}\rightarrow\mathbb{C}$. The question now is the superoscillatory behaviour of the solution $\Psi(t,x)$ at later times $t>0$.

\medskip

The first one addressing this problem was M. Berry who in \cite{B94_1} shows, that for free particles the superoscillatory behaviour occurs within a region $|x|<\mathcal{O}(n)$ and within a time $t<\mathcal{O}(n)$. In particular in the limit $n\rightarrow\infty$ this property is preserved everywhere and for all times. Thereafter, also for nonvanishing potentials the time persistence of superoscillations was proven, as for the harmonic oscillator in \cite{ACST18,ACSS20,ACSS18_2,BCSS14,BS15,CSSY19_1}, the electric field in \cite{ACSST17_2,ACST18,ACSS18_2,BCSS14}, the magnetic field in \cite{ACST18,CGS17}, the centrifugal potential in \cite{ACST18,ACSS20,CGS19,CSSY19_1}, the step potential in \cite{ACSST20_1} and distributional potentials as $\delta$ and $\delta'$ in \cite{ABCS20,ABCS21_1,BCS19}. It was also realized in \cite{CSSY19_1} that precise definition of superoscillations is for general potentials too narrow to persist in time. Hence superoscillations were generalized to supershifts as a consequence.

\medskip

However, up to this point only specific potentials were investigated, in particular cases where the corresponding Green's function was know explicitly. The first unified approach on the time persistence of supershifts, which only requires qualitative properties of the Green's function and no longer its explicit form, was given in \cite{ABCS21_2} for regular potentials $V(t,x)$, i.e. the Schrödinger equation \eqref{Eq_Schroedinger_R} is defined for all $x\in\mathbb{R}$. The topic of this paper is to continue this argument by considering the Schrödinger equation for $x\in\mathbb{R}\setminus\{0\}$ only, but additionally allow boundary conditions at $x=0^\pm$. I.e. we consider the time dependent Schrödinger equation
\begin{subequations}\label{Eq_Cauchy_R0}
\begin{align}
i\frac{\partial}{\partial t}\Psi(t,x)&=\Big(-\frac{\partial^2}{\partial x^2}+V(t,x)\Big)\Psi(t,x), && t\in(0,T),\,x\in\mathbb{R}\setminus\{0\}, \label{Eq_Schroedinger_R0} \\
M\vvect{\Psi(t,0^+)}{\Psi(t,0^+)}&=N\vvect{\frac{\partial}{\partial x}\Psi(t,0^+)}{-\frac{\partial}{\partial x}\Psi(t,0^-)}, && t\in(0,T), \label{Eq_Transmission_R0} \\
\Psi(0,x)&=F(x), && x\in\mathbb{R}\setminus\{0\}. \label{Eq_Initial_R0}
\end{align}
\end{subequations}
Where $V:(0,T)\times\mathbb{R}\setminus\{0\}\rightarrow\mathbb{C}$ is the potential and $M,N\in\mathbb{C}^{2\times 2}$ describe the boundary conditions at $x=0^\pm$. In particular, potentials with singularities, as for example $V(t,x)\sim\frac{1}{x^2}$, as well as distributional potentials as $\delta(x)$ or $\delta'(x)$ are covered by this approach. The key ingredient in the proof of the time persistence will be the representation
\begin{equation}\label{Eq_Greensfunction_integral_formal}
\Psi(t,x)=\int_\mathbb{R}G(t,x,y)F(y)dy
\end{equation}
via the corresponding Green's function $G$. The way how the integral \eqref{Eq_Greensfunction_integral_formal} will be interpreted is the main topic of the upcoming Section \ref{sec_Fresnel_integrals}.

\section{Fresnel integrals}\label{sec_Fresnel_integrals}

In this section we develop the so called Fresnel integral technique, which will be the way to interpret the integral \eqref{Eq_Greensfunction_integral_formal}. Roughly speaking, it is a method to make sense of integrals of the form
\begin{equation}\label{Eq_Nonintegrable}
\int_\mathbb{R}e^{iy^2}f(y)dy,
\end{equation}
also in situations where the function $f$ itself is not integrable. The basic idea is to use the Cauchy theorem to rotate the domain of integration into the complex plain and consequently make the oscillating prefactor $e^{iy^2}$ a Gaussian $e^{i(ye^{i\alpha})^2}$, whose decay at infinity ensures integrability.

\medskip

Note, that the subsequent Fresnel integral technique is in two ways an improvement of the version in \cite{ABCS21_2}. The first improvement lies in the fact, that we allow an exponential growth of order $p\in(0,2)$ in \eqref{Eq_Fresnel_boundedness} and \eqref{Eq_Fresnel_boundedness_R}, while in \cite{ABCS21_2} only $p=1$ was considered. The second improvement lies roughly speaking in the fact, that in \cite{ABCS21_2} the function $f$ had to be holomorphic in a neighborhood of the closed cone $S_\alpha^+\cup\{0\}$, in particular holomorphic in a neighborhood of $[0,\infty)$. In contrast, here it is enough for $f$ to be holomorphic in the interior of $S_\alpha^+$ with a continuous extension to $S_\alpha^+$.

\medskip

\begin{prop}[Fresnel integral]\label{prop_Fresnel_integral}
Let $a>0$, $x\in\mathbb{R}$. Consider for $\alpha\in(0,\frac{\pi}{2})$ the sector

\begin{minipage}{0.3\textwidth}
\hspace{1cm}
\begin{tikzpicture}
\fill[black!20] (2,1.48)--(0,0)--(2,0);
\draw[->] (-0.2,0)--(2.5,0) node[anchor=south] {\tiny{$\Re(z)$}};
\draw[->] (0,-0.2)--(0,1.5) node[anchor=west] {\tiny{$\Im(z)$}};
\draw[thick] (2,1.48)--(0,0)--(2,0);
\draw (1,0) arc (0:36.86:1);
\draw (0.4,0.2) node[anchor=west] {$\alpha$};
\draw (1.2,0.5) node[anchor=west] {$S_\alpha^+$};
\draw[fill=white] (0,0) circle (0.05);
\end{tikzpicture}
\end{minipage}
\begin{minipage}{0.69\textwidth}
\begin{equation}\label{Eq_Salpha+}
S_\alpha^+\coloneqq\Set{z\in\mathbb{C}\setminus\{0\} | \Arg(z)\in[0,\alpha]},
\end{equation}
\end{minipage}

and a continuous function $f:S_\alpha^+\rightarrow\mathbb{C}$ which is holomorphic on $\interior(S_\alpha^+)$ and satisfies the estimate
\begin{equation}\label{Eq_Fresnel_boundedness}
|f(z)|\leq Ae^{B|z|^p},\qquad z\in S_\alpha^+,
\end{equation}
for some and $A,B\geq 0$ and $p\in(0,2)$. Then for every $y_0\in\mathbb{R}$ we get
\begin{equation}\label{Eq_Fresnel_integral}
\lim\limits_{\varepsilon\rightarrow 0^+}\int_0^\infty e^{-\varepsilon(y-y_0)^2}e^{ia(y-x)^2}f(y)dy=e^{i\alpha}\int_0^\infty e^{ia(ye^{i\alpha}-x)^2}f(ye^{i\alpha})dy,
\end{equation}
where both integrands are absolute integrable. Moreover, for $0<\varepsilon<\frac{2a}{\tan(\alpha)}$ we also get
\begin{equation}\label{Eq_Fresnel_integral_without_limit}
\int_0^\infty e^{-\varepsilon(y-y_0)^2}e^{ia(y-x)^2}f(y)dy=e^{i\alpha}\int_0^\infty e^{-\varepsilon(ye^{i\alpha}-y_0)^2}e^{ia(ye^{i\alpha}-x)^2}f(ye^{i\alpha})dy.
\end{equation}
\end{prop}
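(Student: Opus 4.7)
The strategy is to prove \eqref{Eq_Fresnel_integral_without_limit} first by applying Cauchy's theorem on a truncated sector contour, and then to deduce \eqref{Eq_Fresnel_integral} by sending $\varepsilon\to 0^+$ via dominated convergence on the rotated-ray integral. The function to be contour-shifted will be $g(z)\coloneqq e^{-\varepsilon(z-y_0)^2}e^{ia(z-x)^2}f(z)$, which is holomorphic on $\interior(S_\alpha^+)$, continuous on $S_\alpha^+$, and bounded in a punctured neighbourhood of $0$ since \eqref{Eq_Fresnel_boundedness} gives $|f(z)|\leq Ae^B$ for $|z|\leq 1$.

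For absolute convergence, I would expand the combined quadratic exponent and obtain, on the rotated ray $z=ye^{i\alpha}$,
\[
\Re\bigl[-\varepsilon(ye^{i\alpha}-y_0)^2 + ia(ye^{i\alpha}-x)^2\bigr] = -y^2\bigl[\varepsilon\cos(2\alpha)+a\sin(2\alpha)\bigr] + O(y),
\]
and a short trigonometric check (distinguishing $\alpha\leq\pi/4$ from $\alpha>\pi/4$) gives the bracketed quantity strictly positive under $\varepsilon<2a/\tan\alpha$. Combined with $|f(ye^{i\alpha})|\leq Ae^{By^p}$ and $p<2$, this produces a genuine Gaussian dominant and hence integrability; on the real ray the factor $e^{-\varepsilon(y-y_0)^2}$ plays the same role.

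Next, for fixed $0<\delta<R$, I would apply Cauchy's theorem to $g$ on the compact region $K_{\delta,R}\coloneqq\{z\in S_\alpha^+:\delta\leq|z|\leq R\}$, on which $g$ is holomorphic in the interior and continuous up to the boundary. The boundary integral splits into the two radial segments (producing precisely the two sides of \eqref{Eq_Fresnel_integral_without_limit}, up to sign and the Jacobian factor $e^{i\alpha}$), a small inner arc $c_\delta$ of length $\alpha\delta$, and a large outer arc $C_R$. The inner contribution is $O(\delta)$ by the boundedness of $g$ near $0$. On the large arc $z=Re^{i\theta}$ one has
\[
|g(Re^{i\theta})|\leq A\exp\Bigl(-R^2\bigl[\varepsilon\cos(2\theta)+a\sin(2\theta)\bigr]+O(R)+BR^p\Bigr),
\]
and the same trigonometric estimate as before shows that $\min_{\theta\in[0,\alpha]}[\varepsilon\cos(2\theta)+a\sin(2\theta)]$ is a strictly positive constant (attained at an endpoint, since the only interior critical point is a maximum). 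Because $p<2$, the $-R^2$ term wins and $\int_{C_R}g\,dz\to 0$ as $R\to\infty$, producing \eqref{Eq_Fresnel_integral_without_limit}.

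Finally, for \eqref{Eq_Fresnel_integral}, the condition $\varepsilon<2a/\tan\alpha$ is open, so picking any $\varepsilon_0<2a/\tan\alpha$ the Gaussian dominant from the second paragraph holds uniformly in $\varepsilon\in[0,\varepsilon_0]$; dominated convergence then yields the $\varepsilon\to 0^+$ limit of the right-hand side of \eqref{Eq_Fresnel_integral_without_limit}, and by the identity just proved this is also the limit of the left-hand side. I expect the main obstacle to be the large-arc estimate: the Fresnel factor $e^{ia(y-x)^2}$ alone provides no decay at $\theta=0$, so the whole argument hinges on the regulariser $e^{-\varepsilon(y-y_0)^2}$ furnishing enough Gaussian damping along the entire arc --- which is precisely what forces the restriction $\varepsilon<2a/\tan\alpha$. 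A secondary but equally essential point is that $f$ is only continuous up to $S_\alpha^+$ (and not even defined at $z=0$); the small arc $c_\delta$ bypasses this, and its vanishing relies only on the growth bound at the origin supplied by \eqref{Eq_Fresnel_boundedness}, not on continuity there.
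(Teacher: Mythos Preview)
Your argument is correct, but it differs from the paper's in two structural ways.

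First, the handling of the boundary: the paper does \emph{not} invoke Cauchy's theorem for functions that are merely continuous up to $\partial S_\alpha^+$. Instead it introduces a shifted function $f_\eta(z)\coloneqq f(z+\eta)$ with $\eta\in\interior(S_\alpha^+)$, so that the whole contour lies strictly inside the open sector where $f_\eta$ is holomorphic. After proving the identity for $f_\eta$, it passes to the limit $\eta\to 0$ by dominated convergence (the bound \eqref{Eq_Fresnel_boundedness} being uniform in $\eta$). Your approach trades this $\eta$-limit for the ``holomorphic on the interior, continuous on the closure'' version of Cauchy's theorem together with the inner-arc limit $\delta\to 0$; both are legitimate, yours is more direct, theirs more elementary.

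Second, the contour itself is different: the paper uses a \emph{triangle} with a vertical side $\gamma_3=\{R+iy:0\le y\le R\tan\alpha\}$ rather than a circular outer arc. On that vertical segment the Fresnel factor contributes $e^{-2aRy}$, which already kills the integral, and the $\varepsilon$-regulariser only needs to satisfy $2R-\varepsilon y>0$ (giving the same restriction $\varepsilon<2a/\tan\alpha$). Your circular-arc estimate via $\min_{\theta\in[0,\alpha]}[\varepsilon\cos 2\theta+a\sin 2\theta]>0$ is also correct --- the function is a shifted sine whose only interior critical point is indeed a maximum, and both endpoint values are positive under the stated $\varepsilon$-bound --- but it requires a short trigonometric argument where the paper's vertical segment gives the decay more transparently.
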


\begin{proof}
Since the calculation is the same, we will for simplicity only consider $x=0$, $a=1$ and $y_0=0$. For any $\eta\in\interior(S_\alpha^+)$ with $|\eta|\leq 1$, we define the shifted function
\begin{equation}\label{Eq_Fresnel_integral_4}
f_\eta(z)\coloneqq f(z+\eta),\qquad z\in S_\alpha^+-\eta.
\end{equation}
Then $f_\eta$ is holomorphic on $\interior(S_\alpha^+)-\eta$ and, by \eqref{Eq_Fresnel_boundedness}, admits the exponential bound
\begin{equation}\label{Eq_Fresnel_integral_1}
|f_\eta(z)|\leq Ae^{B|z+\eta|^p}\leq Ae^{B2^p(|z|^p+|\eta|^p)}\leq Ae^{B2^p(|z|^p+1)}=\widetilde{A}e^{\widetilde{B}|z|^p},\qquad z\in S_\alpha^+-\eta,
\end{equation}
using the new constants $\widetilde{A}\coloneqq Ae^{B2^p}$ and $\widetilde{B}=B2^p$. Fixing $R>0$, we then consider the path

\begin{minipage}{0.45\textwidth}
\begin{align*}
\gamma_1&\coloneqq\Set{y | 0\leq y\leq R}, \\
\gamma_2&\coloneqq\Set{ye^{i\alpha} | 0\leq y\leq\frac{R}{\cos(\alpha)}}, \\
\gamma_3&\coloneqq\Set{R+iy | R\tan(\alpha)\geq y\geq 0}.
\end{align*}
\end{minipage}
\hspace{1cm}
\begin{minipage}{0.45\textwidth}
\begin{tikzpicture}[scale=0.7]
\fill[black!20] (3.5,-0.5)--(-1.6,-0.5)--(3.5,3.25);
\fill[black!40] (3.8,2.82)--(0,0)--(3.8,0);
\draw[thick] (3.5,-0.5)--(-1.6,-0.5)--(3.5,3.25);
\draw[thick] (0,0)--(3.8,2.82);
\draw[thick,->] (-1.5,0)--(5,0) node[anchor=south] {\tiny{$\Re(z)$}};
\draw[thick,->] (0,-0.8)--(0,3) node[anchor=west] {\tiny{$\Im(z)$}};
\draw[ultra thick,->] (0,0)--(2,0);
\draw[ultra thick] (0,0)--(3,0) (1.6,0.05) node[anchor=north] {$\gamma_1$};
\draw[ultra thick,->] (3,2)--(3,0.8);
\draw[ultra thick] (3,0)--(3,2.23) (2.95,1.2) node[anchor=west] {$\gamma_3$};
\draw[ultra thick,->] (0,0)--(2,1.48);
\draw[ultra thick] (0,0)--(3,2.23) (2,1.3) node[anchor=north] {$\gamma_2$};
\draw[thick] (1.3,0) arc (0:36.86:1.3) (0.5,0.25) node[anchor=west] {\small{$\alpha$}};
\draw[thick] (-0.3,-0.5) arc (0:36.86:1.3) (-1.2,-0.3) node[anchor=west] {\small{$\alpha$}};
\draw (3,0) node[anchor=north] {\tiny{$R$}};
\draw (3.4,1.7)--(4,1.5) node[anchor=west] {$S_\alpha^+$};
\draw (1.7,1.7)--(1,2) node[anchor=south] {$S_\alpha^+-\eta$};
\draw[fill=white] (-1.6,-0.5) circle (0.08) node[anchor=east] {$-\eta$};
\end{tikzpicture}
\end{minipage}

\medskip

Since the paths $\gamma_1$, $\gamma_2$, $\gamma_3$ lie inside $\interior(S_\alpha^+-\eta)$, where $f_\eta$ is holomorphic, Cauchy's theorem yields for every $\varepsilon>0$
\begin{equation}\label{Eq_Fresnel_integral_5}
\int_{\gamma_1}e^{(i-\varepsilon)z^2}f_\eta(z)dz=\int_{\gamma_2}e^{(i-\varepsilon)z^2}f_\eta(z)dz+\int_{\gamma_3}e^{(i-\varepsilon)z^2}f_\eta(z)dz.
\end{equation}
Using the exponential bound \eqref{Eq_Fresnel_integral_1}, we can estimate the integral along $\gamma_3$ as
\begin{align*}
\Big|\int_{\gamma_3}e^{(i-\varepsilon)z^2}f_\eta(z)dz\Big|&\leq\widetilde{A}e^{-\varepsilon R^2}\int_0^{R\tan(\alpha)}e^{\varepsilon y^2-2Ry+\widetilde{B}|R+iy|^p}dy \\
&\leq\widetilde{A}e^{-\varepsilon R^2+\frac{\widetilde{B}R^p}{\cos^p(\alpha)}}\int_0^{R\tan(\alpha)}e^{-y(2R-\varepsilon y)}dy \\
&\leq\widetilde{A}R\tan(\alpha)e^{-\varepsilon R^2+\frac{\widetilde{B}R^p}{\cos^p(\alpha)}},
\end{align*}
where in the last line we restricted $\varepsilon<\frac{2}{\tan(\alpha)}$ to conclude $2R-\varepsilon y>0$. This estimate proves the convergence
\begin{equation*}
\lim\limits_{R\rightarrow\infty}\int_{\gamma_3}e^{(i-\varepsilon)z^2}f_\eta(z)dz=0.
\end{equation*}
Consequently, in the limit $R\rightarrow\infty$, the integrals \eqref{Eq_Fresnel_integral_5} become
\begin{equation}\label{Eq_Fresnel_integral_11}
\int_0^\infty e^{(i-\varepsilon)y^2}f_\eta(y)dy=e^{i\alpha}\int_0^\infty e^{(i-\varepsilon)(ye^{i\alpha})^2}f_\eta(ye^{i\alpha})dy.
\end{equation}
Here both integrals are absolute convergent, the left hand side because of the factor $e^{-\varepsilon y^2}$ and the right hand side due to the estimate
\begin{align}
\big|e^{(i-\varepsilon)(ye^{i\alpha})^2}f_\eta(ye^{i\alpha})\big|&\leq\widetilde{A}e^{-(\sin(2\alpha)+\varepsilon\cos(2\alpha))y^2+\widetilde{B}y^p}, \notag \\
&\leq\widetilde{A}e^{-(\frac{2}{\tan(\alpha)}-\varepsilon)\sin^2(\alpha)y^2+\widetilde{B}y^p}, \label{Eq_Fresnel_integral_2}
\end{align}
which is integrable for every $\varepsilon<\frac{2}{\tan(\alpha)}$. Moreover, since the upper bound \eqref{Eq_Fresnel_integral_2} is $\eta$-independent, we can apply the dominated convergence theorem to both sides of \eqref{Eq_Fresnel_integral_11} and obtain
\begin{equation}\label{Eq_Fresnel_integral_3}
\int_0^\infty e^{(i-\varepsilon)y^2}f(y)dy=e^{i\alpha}\int_0^\infty e^{(i-\varepsilon)(ye^{i\alpha})^2}f(ye^{i\alpha})dy,
\end{equation}
which is exactly the identity \eqref{Eq_Fresnel_integral_without_limit}. Finally, we want to apply the limit $\varepsilon\rightarrow 0^+$ to this equation. By the estimate \eqref{Eq_Fresnel_integral_2} for $f$ instead of $f_\eta$, i.e. formally putting $\eta=0$, the integrand on the right hand side of \eqref{Eq_Fresnel_integral_3} is bounded by some majorant which decreases as $\varepsilon\rightarrow 0^+$. By the dominated convergence theorem we then obtain the stated limit \eqref{Eq_Fresnel_integral}.
\end{proof}

The Fresnel integral technique of Proposition \ref{prop_Fresnel_integral} can also be applied on the negative semi axis, which leads to the following corollary.

\medskip

\begin{cor}\label{cor_Fresnel_integral}
Let $a>0$, $x\in\mathbb{R}$. Consider for some $\alpha\in(0,\frac{\pi}{2})$ the double sector

\begin{minipage}{0.3\textwidth}
\begin{center}
\begin{tikzpicture}
\fill[black!20] (0.05,0)--(0.04,0.03)--(1.5,1.11)--(1.5,0);
\fill[black!20] (-0.05,0)--(-0.04,-0.03)--(-1.5,-1.11)--(-1.5,0);
\draw[->] (-2,0)--(2,0) node[anchor=south] {\tiny{$\Re(z)$}};
\draw[->] (0,-1)--(0,1) node[anchor=west] {\tiny{$\Im(z)$}};
\draw[thick] (-1.5,-1.11)--(1.5,1.11);
\draw[thick] (-1.5,0)--(1.5,0);
\draw (0.8,0) arc (0:36.86:0.8) (0.3,0.18) node[anchor=west] {$\alpha$};
\draw (-0.8,0) arc (180:216.86:0.8) (-0.3,-0.18) node[anchor=east] {$\alpha$};
\draw (0.8,0.4) node[anchor=west] {\large{$S_\alpha$}};
\draw[fill=white] (0,0) circle (0.05);
\end{tikzpicture}
\end{center}
\end{minipage}
\begin{minipage}{0.65\textwidth}
\begin{equation}\label{Eq_Salpha}
S_\alpha\coloneqq\Set{z\in\mathbb{C}\setminus\{0\} | \Arg(z)\in[0,\alpha]\cup[\pi,\pi+\alpha]},
\end{equation}
\end{minipage}

and a continuous function $f:S_\alpha\rightarrow\mathbb{C}$ which is holomporphic on $\interior(S_\alpha)$ and satisfies the estimate
\begin{equation}\label{Eq_Fresnel_boundedness_R}
|f(z)|\leq Ae^{B|z|^p},\qquad z\in S_\alpha,
\end{equation}
for some $A,B\geq 0$ and $p\in(0,2)$. Then for every $y_0\in\mathbb{R}$ we get
\begin{equation}\label{Eq_Fresnel_integral_R}
\lim\limits_{\varepsilon\rightarrow 0^+}\int_\mathbb{R}e^{-\varepsilon(y-y_0)^2}e^{ia(y-x)^2}f(y)dy=e^{i\alpha}\int_\mathbb{R}e^{ia(ye^{i\alpha}-x)^2}f(ye^{i\alpha})dy,
\end{equation}
where both integrands are absolute integrable. Moreover, for $0<\varepsilon<\frac{2a}{\tan(\alpha)}$ we also get
\begin{equation}\label{Eq_Fresnel_integral_R_without_limit}
\int_\mathbb{R}e^{-\varepsilon(y-y_0)^2}e^{ia(y-x)^2}f(y)dy=e^{i\alpha}\int_\mathbb{R}e^{-\varepsilon(ye^{i\alpha}-y_0)^2}e^{ia(ye^{i\alpha}-x)^2}f(ye^{i\alpha})dy.
\end{equation}
\end{cor}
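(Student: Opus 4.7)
The plan is to reduce the double-sector statement to two applications of Proposition \ref{prop_Fresnel_integral}, one for each ray of $S_\alpha$. Concretely, I would split the real line at the origin,
\begin{equation*}
\int_{\mathbb{R}}\cdots\,dy=\int_{0}^{\infty}\cdots\,dy+\int_{-\infty}^{0}\cdots\,dy,
\end{equation*}
and treat each half independently. The positive half is handled by applying Proposition \ref{prop_Fresnel_integral} directly to $f|_{S_\alpha^+}$, whose hypotheses (continuity on $S_\alpha^+$, holomorphy on $\interior(S_\alpha^+)$, and the bound \eqref{Eq_Fresnel_boundedness}) are inherited from the corresponding hypotheses on $S_\alpha$ by restriction.

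For the negative half I would introduce the auxiliary function
\begin{equation*}
\widetilde{f}(z)\coloneqq f(-z),\qquad z\in S_\alpha^+,
\end{equation*}
which is well-defined and continuous on $S_\alpha^+$ because $-S_\alpha^+$ is exactly the second component $\{\Arg(z)\in[\pi,\pi+\alpha]\}$ of $S_\alpha$. It is holomorphic on $\interior(S_\alpha^+)$ and satisfies $|\widetilde{f}(z)|\leq Ae^{B|z|^p}$, so it fits the setting of Proposition \ref{prop_Fresnel_integral}. Then the substitution $y\mapsto -y$ in $\int_{-\infty}^{0}e^{-\varepsilon(y-y_0)^2}e^{ia(y-x)^2}f(y)\,dy$ converts it to
\begin{equation*}
\int_{0}^{\infty}e^{-\varepsilon(y-(-y_0))^2}e^{ia(y-(-x))^2}\widetilde{f}(y)\,dy,
\end{equation*}
to which Proposition \ref{prop_Fresnel_integral} applies with the shifted parameters $-x,-y_0$ in place of $x,y_0$.

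Having applied the proposition to both halves, I would reassemble the right-hand sides. For the positive ray one directly obtains the $[0,\infty)$-piece of \eqref{Eq_Fresnel_integral_R}; for the negative ray the identity yields $e^{i\alpha}\int_0^\infty e^{ia(ye^{i\alpha}+x)^2}f(-ye^{i\alpha})\,dy$, which after substituting $y\mapsto -y$ (and using $(-a)^2=a^2$ in the exponent) becomes $e^{i\alpha}\int_{-\infty}^{0}e^{ia(ye^{i\alpha}-x)^2}f(ye^{i\alpha})\,dy$. Adding the two pieces produces exactly the right-hand side of \eqref{Eq_Fresnel_integral_R}, and absolute integrability of the combined integrand follows from the corresponding absolute integrability on each ray. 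The same bookkeeping, applied to \eqref{Eq_Fresnel_integral_without_limit} on each ray, yields \eqref{Eq_Fresnel_integral_R_without_limit}.

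The computation is essentially clerical; the only point that requires care is the symmetry of the substitution. The main (minor) obstacle is bookkeeping the signs in $(y-x)^2$, $(y-y_0)^2$ and the parametrisation $ye^{i\alpha}$ under $y\mapsto -y$, so that the two half-line identities recombine into a single $\int_\mathbb{R}$ statement in the form given by \eqref{Eq_Fresnel_integral_R}, rather than into an expression involving $\widetilde{f}$ and shifted parameters.
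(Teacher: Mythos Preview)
Your proposal is correct and matches the paper's approach exactly: the paper does not spell out a proof of the corollary but simply remarks that Proposition~\ref{prop_Fresnel_integral} ``can also be applied on the negative semi axis,'' which is precisely your reflection $\widetilde{f}(z)=f(-z)$ together with the substitution $y\mapsto -y$. Your bookkeeping of the signs and shifted parameters is accurate, and the reassembly into a single integral over $\mathbb{R}$ is straightforward.
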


\section{Schrödinger equation on $\mathbb{R}\setminus\{0\}$}\label{sec_Schroedinger_equation_on_R0}

The central topic of this paper is the investigation of the Cauchy problem \eqref{Eq_Cauchy_R0}. In particular, we consider the Green's function approach \eqref{Eq_Greensfunction_integral_formal}. The main result of this section will then be Theorem \ref{satz_Greensfunction}, which puts the integral \eqref{Eq_Greensfunction_integral_formal} into a mathematical rigorous framework and also provides a continuous dependency between the initial condition $F$ and the solution $\Psi$. This continuous dependency result will then be the main ingredient in Section \ref{sec_Stability_of_superoscillations_and_supershifts} to conclude the timer persistence of the supershift property.

\medskip

We start by specifying in detail in which sense we want to understand the Cauchy problem \eqref{Eq_Cauchy_R0}. It will be convenient to view the solution (and its derivatives) in the context of absolute continuous functions. The linear space of absolute continuous functions on some open interval $I\subseteq\mathbb{R}$ will be denoted by $\AC(I)$. Recall, that $f\in\AC(I)$ if and only if there exists some $g\in L^1_\text{loc}(I)$, such that
\begin{equation}\label{Eq_Absolute_continuous_integral}
f(y)-f(x)=\int_x^y g(s)ds,\qquad x,y\in I.
\end{equation}
Also observe, that $f\in\AC(I)$ is differentiable almost everywhere and its derivative $f'$ coincides almost everywhere with $g$ in \eqref{Eq_Absolute_continuous_integral}. Moreover, we understand the space of absolute continuous functions on $\dot{\mathbb{R}}\coloneqq\mathbb{R}\setminus\{0\}$ as
\begin{equation*}
\AC(\dot{\mathbb{R}})\coloneqq\Set{f:\dot{\mathbb{R}}\rightarrow\mathbb{C} | f\vert_{(-\infty,0)}\in\AC((-\infty,0))\text{ and }f\vert_{(0,\infty)}\in\AC((0,\infty))}.
\end{equation*}
For $T\in(0,\infty]$ we shall now work with the space
\begin{equation}\label{Eq_AC12_R0}
\AC_{1,2}((0,T)\times\dot{\mathbb{R}})\coloneqq\Set{\Psi:(0,T)\times\dot{\mathbb{R}}\rightarrow\mathbb{C} | \begin{array}{l} \Psi(\,\cdot\,,x)\in\AC((0,T)),\;\forall x\in\dot{\mathbb{R}} \\ \Psi(t,\,\cdot\,),\Psi_x(t,\,\cdot\,)\in\AC(\dot{\mathbb{R}}),\;\forall t\in(0,T) \end{array}}.
\end{equation}
Let $V:(0,T)\times\dot{\mathbb{R}}\rightarrow\mathbb{C}$ be some potential, $M,N\in\mathbb{C}^{2\times 2}$ matrices describing the transmission condition and $F:\dot{\mathbb{R}}\rightarrow\mathbb{C}$ the initial condition. We call a function $\Psi\in\AC_{1,2}((0,T)\times\dot{\mathbb{R}})$ a solution of the time dependent Schrödinger equation, if it satisfies
\begin{subequations}\label{Eq_Psi_Cauchy}
\begin{align}
i\frac{\partial}{\partial t}\Psi(t,x)&=\Big(-\frac{\partial^2}{\partial x^2}+V(t,x)\Big)\Psi(t,x), && \text{f.a.e. }t\in(0,T),\,x\in\dot{\mathbb{R}}, \label{Eq_Psi_Schroedinger} \\
M\vvect{\Psi(t,0^+)}{\Psi(t,0^+)}&=N\vvect{\Psi_x(t,0^+)}{-\Psi_x(t,0^-)}, && t\in(0,T), \label{Eq_Psi_transmission} \\
\lim\limits_{t\rightarrow 0^+}\Psi(t,x)&=F(x), && x\in\dot{\mathbb{R}}. \label{Eq_Psi_initial}
\end{align}
\end{subequations}
The corresponding \textit{Green's function} is a function $G:(0,T)\times\dot{\mathbb{R}}\times\dot{\mathbb{R}}\rightarrow\mathbb{C}$, which depends on the potential $V$ and the boundary matrices $M,N$, but not on the initial condition $F$, such that the solution $\Psi$ admits the (formal) representation
\begin{equation}\label{Eq_Psi_integral_formal}
\Psi(t,x)=\int_\mathbb{R}G(t,x,y)F(y)dy,\qquad t\in(0,T),\,x\in\dot{\mathbb{R}}.
\end{equation}
In the following Assumptions \ref{ass_Greensfunction} we provide a set of properties for the Green's function $G$, in order to give meaning to the integral \eqref{Eq_Psi_integral_formal} and ensure that $\Psi(t,x)$ indeed is a solution of the Cauchy problem \eqref{Eq_Psi_Cauchy}.

\medskip

\begin{ass}\label{ass_Greensfunction}
Let $T\in(0,\infty]$ and $G:(0,T)\times\dot{\mathbb{R}}\times\dot{\mathbb{R}}\rightarrow\mathbb{C}$. For some $\alpha\in(0,\frac{\pi}{2})$ let $S_\alpha$ be the double sector \eqref{Eq_Salpha}, and suppose that $G$ admits a continuation $G:(0,T)\times\dot{\mathbb{R}}\times S_\alpha\rightarrow\mathbb{C}$, such that for every fixed $t\in(0,T)$, $x\in\dot{\mathbb{R}}$ the mapping $G(t,x,\,\cdot\,)$ is continuous on $S_\alpha$ and holomorphic on $\interior(S_\alpha)$. Moreover, it will be assumed that $G$ satisfies the following properties (i)--(iii).

\begin{enumerate}
\item[(i)] For every fixed $z\in S_\alpha$, the function $G(\,\cdot\,,\,\cdot\,,z)\in\AC_{1,2}((0,T)\times\dot{\mathbb{R}})$ is a solution of the time dependent Schrödinger equation
\begin{equation}\label{Eq_G_Schroedinger}
i\frac{\partial}{\partial t}G(t,x,z)=\Big(-\frac{\partial^2}{\partial x^2}+V(t,x)\Big)G(t,x,z),\qquad\text{f.a.e. }t\in(0,T),\,x\in\dot{\mathbb{R}},
\end{equation}
with $V:(0,T)\times\dot{\mathbb{R}}\rightarrow\mathbb{C}$ the considered potential. Moreover, for every $y\in\dot{\mathbb{R}}$ the Green's function satisfies the transmission condition
\begin{equation}\label{Eq_G_transmission}
M\vvect{G(t,0^+,y)}{G(t,0^-,y)}=N\vvect{G_x(t,0^+,y)}{-G_x(t,0^-,y)},\qquad t\in(0,T),
\end{equation}
with matrices $M,N\in\mathbb{C}^{2\times 2}$.

\item[(ii)] For every $x\in\dot{\mathbb{R}}$ there exists some $x_0>|x|$, such that
\begin{equation}\label{Eq_G_initial}
\lim\limits_{t\rightarrow 0^+}\int_{-x_0}^{x_0}G(t,x,y)\varphi(y)dy=\varphi(x),\qquad\varphi\in\mathcal{C}^\infty([-x_0,x_0]).
\end{equation}

\item[(iii)] There exists $a\in\AC((0,T))$ with $a(t)>0$ and $\lim_{t\rightarrow 0^+}a(t)=\infty$, such that the function $\widetilde{G}$ in the decomposition
\begin{equation}\label{Eq_G_decomposition}
G(t,x,z)=e^{ia(t)(z-x)^2}\widetilde{G}(t,x,z),\qquad t\in(0,T),\,x\in\dot{\mathbb{R}},\,z\in S_\alpha,
\end{equation}
is for every $t\in(0,T)$, $x\in\dot{\mathbb{R}}$ exponentially bounded as
\begin{subequations}\label{Eq_Gtilde_estimates}
\begin{align}
\big|\widetilde{G}(t,x,z)\big|&\leq A_0(t,x)e^{B_0(t,x)|z|^p},\qquad z\in S_\alpha, \label{Eq_Gtilde_estimate} \\
\Big|\frac{\partial}{\partial x}\widetilde{G}(t,x,z)\Big|&\leq A_1(t,x)e^{B_1(t,x)|z|^p},\qquad z\in S_\alpha,\label{Eq_Gtildex_estimate} \\
\Big|\frac{\partial^2}{\partial x^2}\widetilde{G}(t,x,z)\Big|,\,\Big|\frac{\partial}{\partial t}\widetilde{G}(t,x,z)\Big|&\leq A_2(t,x)e^{B_2(t,x)|z|^p},\qquad z\in S_\alpha.\label{Eq_Gtilde_derivative_estimate}
\end{align}
\end{subequations}
Here $p\in(0,2)$ and $A_0,A_1,A_2,B_0,B_1,B_2:(0,T)\times\dot{\mathbb{R}}\rightarrow[0,\infty)$ are continuous and for every $x\in\dot{\mathbb{R}}$
\begin{equation}\label{Eq_Coefficient_extension}
\frac{A_0(\,\cdot\,,x)}{\sqrt{a(t)}}\text{ and }B_0(\,\cdot\,,x)\text{ are bounded as }t\rightarrow 0^+,
\end{equation}
and for every $t\in(0,T)$:

\begin{enumerate}
\item[$\circ$] If $M=N=0$, no further assumptions.

\item[$\circ$] If $M\neq 0$, $N=0$, then $A_0(t,\,\cdot\,)$, $B_0(t,\,\cdot\,)$ are bounded as $x\rightarrow 0^\pm$.

\item[$\circ$] If $N\neq 0$, then $A_0(t,\,\cdot\,)$, $A_1(t,\,\cdot\,)$, $B_0(t,\,\cdot\,)$, $B_1(t,\,\cdot\,)$ are bounded as $x\rightarrow 0^\pm$.
\end{enumerate}
\end{enumerate}
\end{ass}

\medskip

Once we fixed the assumptions on the Green's function, we still need to specify the allowed initial conditions $F$ in \eqref{Eq_Psi_initial}. The following space $\mathcal{A}_q(\mathbb{C})$ is natural in the sense that it fits with the assumptions \eqref{Eq_Fresnel_boundedness_R} and also contains the superoscillating functions \eqref{Eq_Example} for $p=1$ as well as the supershift functions of Definition \ref{defi_Supershift}.

\medskip

\begin{defi}\label{defi_Ap}
Let $\mathcal{H}(\mathbb{C})$ denote the set of all entire functions. Then for every $q>0$ define the \textit{space of entire functions with exponential growth of order} $q$ as
\begin{equation}\label{Eq_Ap_space}
\mathcal{A}_q(\mathbb{C})\coloneqq\Set{F\in\mathcal{H}(\mathbb{C}) | \exists A,B\geq 0\text{ such that }|F(z)|\leq Ae^{B|z|^q}\text{ for all }z\in\mathbb{C}}.
\end{equation}
A sequence of functions $(F_n)_n\in\mathcal{A}_q(\mathbb{C})$ converges to $F_0\in\mathcal{A}_q(\mathbb{C})$ in $\mathcal{A}_q(\mathbb{C})$, if and only if there exists some $B\geq 0$, such that
\begin{equation}\label{Eq_Ap_convergence}
\lim\limits_{n\rightarrow\infty}\sup\limits_{z\in\mathbb{C}}|F_n(z)-F_0(z)|e^{-B|z|^q}=0.
\end{equation}
\end{defi}

\begin{satz}\label{satz_Greensfunction}
Let $G:(0,T)\times\dot{\mathbb{R}}\times\dot{\mathbb{R}}\rightarrow\mathbb{C}$ be as in Assumption \ref{ass_Greensfunction}. Then for every $F\in\mathcal{A}_q(\mathbb{C})$, $q\in(0,2)$, the wave function
\begin{equation}\label{Eq_Psi}
\Psi(t,x)\coloneqq\lim\limits_{\varepsilon\rightarrow 0^+}\int_\mathbb{R}e^{-\varepsilon y^2}G(t,x,y)F(y)dy,\qquad t\in(0,T),\,x\in\dot{\mathbb{R}},
\end{equation}
exists and $\Psi\in\AC_{1,2}((0,T)\times\dot{\mathbb{R}})$ is a solution of the Cauchy problem \eqref{Eq_Psi_Cauchy}. Moreover, if initial conditions $(F_n)_n\in\mathcal{A}_q(\mathbb{C})$ converge as $F_n\overset{n\rightarrow\infty}{\longrightarrow}F$ in $\mathcal{A}_q(\mathbb{C})$, also the corresponding solutions converge as
\begin{equation}\label{Eq_Psi_convergence}
\lim\limits_{n\rightarrow\infty}\Psi(t,x;F_n)=\Psi(t,x;F),
\end{equation}
for fixed $t\in(0,T)$ and uniformly on compact subsets of $\dot{\mathbb{R}}$.
\end{satz}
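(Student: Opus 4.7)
The strategy is to build everything on the Fresnel identity from Corollary~\ref{cor_Fresnel_integral}, applied to $f(z):=\widetilde{G}(t,x,z)\,F(z)$ on the double sector $S_\alpha$. Since $\widetilde{G}(t,x,\,\cdot\,)$ is continuous on $S_\alpha$ and holomorphic on $\interior(S_\alpha)$ by Assumption~\ref{ass_Greensfunction}, and $F\in\mathcal{A}_q(\mathbb{C})$ is entire, $f$ inherits both regularities; the bounds \eqref{Eq_Gtilde_estimate} and \eqref{Eq_Ap_space} combine to an exponential estimate of order $r:=\max(p,q)\in(0,2)$. Because \eqref{Eq_G_decomposition} provides exactly the oscillatory prefactor $e^{ia(t)(y-x)^2}$ required by \eqref{Eq_Fresnel_integral_R}, Corollary~\ref{cor_Fresnel_integral} yields both the existence of the limit \eqref{Eq_Psi} and the representation
\begin{equation}\label{Eq_Psi_rotated}
\Psi(t,x)=e^{i\alpha}\int_\mathbb{R}e^{ia(t)(ye^{i\alpha}-x)^2}\widetilde{G}(t,x,ye^{i\alpha})F(ye^{i\alpha})\,dy,
\end{equation}
whose integrand carries the Gaussian decay $|e^{ia(t)(ye^{i\alpha})^2}|=e^{-a(t)\sin(2\alpha)y^2}$ that dominates every subquadratic exponential.

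This Gaussian makes differentiation under the integral sign in $t$ and twice in $x$ straightforward: together with \eqref{Eq_Gtildex_estimate}--\eqref{Eq_Gtilde_derivative_estimate} and the Schr\"odinger equation \eqref{Eq_G_Schroedinger} satisfied by $G(\,\cdot\,,\,\cdot\,,z)$, one obtains $\Psi\in\AC_{1,2}((0,T)\times\dot{\mathbb{R}})$ and the Schr\"odinger equation \eqref{Eq_Psi_Schroedinger}. The transmission condition \eqref{Eq_Psi_transmission} follows by letting $x\to 0^\pm$ inside \eqref{Eq_Psi_rotated}, which is legitimate precisely under the case-by-case boundedness of $A_0,B_0$ (needed whenever $M\neq 0$) and additionally $A_1,B_1$ (needed whenever $N\neq 0$) as $x\to 0^\pm$, as stipulated in Assumption~\ref{ass_Greensfunction}(iii).

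The initial condition \eqref{Eq_Psi_initial} is the main obstacle and the only place where Assumption~\ref{ass_Greensfunction}(ii) and the boundedness \eqref{Eq_Coefficient_extension} enter. Fix $x\in\dot{\mathbb{R}}$ and let $x_0>|x|$ be as in (ii); split
\begin{equation*}
\int_\mathbb{R}e^{-\varepsilon y^2}G(t,x,y)F(y)\,dy=\Big(\int_{-\infty}^{-x_0}+\int_{-x_0}^{x_0}+\int_{x_0}^\infty\Big)e^{-\varepsilon y^2}G(t,x,y)F(y)\,dy.
\end{equation*}
The middle term converges as $\varepsilon\to 0^+$ by dominated convergence on a compact set, and then (ii) applied to $F|_{[-x_0,x_0]}\in\mathcal{C}^\infty([-x_0,x_0])$ gives its $t\to 0^+$ limit as $F(x)$. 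For the two tails, the shifted half-axis identity of Proposition~\ref{prop_Fresnel_integral} applied to $z\mapsto\widetilde{G}(t,x,\pm(z+x_0))F(\pm(z+x_0))$ turns each tail into a contour integral whose oscillatory Gaussian has magnitude $e^{-a(t)[\sin(2\alpha)y^2+2(x_0\mp x)\sin(\alpha)y]}$. Using $x_0\mp x>0$, the rescaling $u=y\sqrt{a(t)}$ together with \eqref{Eq_Coefficient_extension} (which keeps $A_0/\sqrt{a(t)}$ and $B_0$ bounded as $t\to 0^+$) bounds each tail by an integral of the form
\begin{equation*}
C(x)\int_0^\infty e^{-\sin(2\alpha)u^2}\,e^{-2\sqrt{a(t)}(x_0\mp x)\sin(\alpha)u}\,e^{C'(u/\sqrt{a(t)}+x_0)^r}\,du,
\end{equation*}
with $C,C'$ independent of $t$ for small $t$; the middle factor tends pointwise to $0$ on $(0,\infty)$ and is dominated by the integrable Gaussian majorant, so dominated convergence forces both tails to $0$ and establishes \eqref{Eq_Psi_initial}.

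For the continuous dependence \eqref{Eq_Psi_convergence}, the representation \eqref{Eq_Psi_rotated} together with the definition \eqref{Eq_Ap_convergence} of convergence in $\mathcal{A}_q(\mathbb{C})$ supplies a uniform bound $|F_n(ye^{i\alpha})-F(ye^{i\alpha})|\leq\eta_n e^{B|y|^q}$ with $\eta_n\to 0$. Dominated convergence on \eqref{Eq_Psi_rotated} against the Gaussian gives pointwise convergence, and since the dominating majorant depends continuously on $x$ by Assumption~\ref{ass_Greensfunction}(iii), this pointwise convergence upgrades to uniform convergence on compact subsets of $\dot{\mathbb{R}}$.
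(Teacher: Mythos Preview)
Your proof follows essentially the same five-step architecture as the paper's (Fresnel rotation $\to$ differentiation under the integral $\to$ transmission condition $\to$ initial condition via the three-piece split at $\pm x_0$ $\to$ continuous dependence), and the key estimates you invoke match. One point, however, is handled too quickly. In Assumption~\ref{ass_Greensfunction}(i) the transmission identity \eqref{Eq_G_transmission} is stated only for real $y\in\dot{\mathbb{R}}$, not for $z\in S_\alpha$. So after passing to the limit $x\to 0^\pm$ inside the rotated integral \eqref{Eq_Psi_rotated} you obtain $\Psi(t,0^\pm)=e^{i\alpha}\int_\mathbb{R}G(t,0^\pm,ye^{i\alpha})F(ye^{i\alpha})\,dy$, but you cannot directly apply \eqref{Eq_G_transmission} to the integrand at the complex points $ye^{i\alpha}$. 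The paper closes this gap by invoking Corollary~\ref{cor_Fresnel_integral} once more to rotate the boundary integrals back onto the real axis (equations \eqref{Eq_Psi_boundary} and \eqref{Eq_Psix_boundary}), where \eqref{Eq_G_transmission} applies verbatim. You should either do the same, or else justify that the transmission identity extends to $z\in S_\alpha$ by analyticity---which is plausible but requires showing that $z\mapsto G(t,0^\pm,z)$ and $z\mapsto G_x(t,0^\pm,z)$ are themselves holomorphic on $\interior(S_\alpha)$ and continuous on $S_\alpha$, a fact not contained in the stated assumptions.

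A smaller remark: your ``differentiation under the integral sign is straightforward'' is morally right, but to actually conclude $\Psi\in\AC_{1,2}$ (rather than mere a.e.\ differentiability) the paper writes $G(t,x_1,z)-G(t,x_0,z)=\int_{x_0}^{x_1}G_x(t,x,z)\,dx$ and uses Fubini on the double integral; this is what certifies absolute continuity in $x$. Your Step~5 via the single rotated representation is in fact slightly cleaner than the paper's, which reuses the three-piece split there as well.
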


\medskip

Note, that for convenience we used the notation $\Psi(t,x;F)$ to emphasize the initial condition.

\medskip

Since in practical applications the initial condition \eqref{Eq_G_initial} is often hard to verify, the following Corollary \ref{cor_Greensfunction} gives an opportunity to replace it by the simple limit \eqref{Eq_G_initial_easy}. Roughly speaking, the limit \eqref{Eq_G_initial_easy} is one way how the Green's function approaches the $\delta$-function as $t\rightarrow 0^+$. However, in order to use this simplification it is necessary for the Green's function to be holomorphic (and satisfy \eqref{Eq_Gtilde_estimate}) not only on $S_\alpha$ but also on a neighborhood of $\dot{\mathbb{R}}$. More precisely, for $\alpha\in(0,\frac{\pi}{2})$ and $h>0$ we consider

\begin{minipage}{0.19\textwidth}
\begin{tikzpicture}[scale=0.7]
\fill[black!20] (2,-0.5)--(0.68,-0.5)--(0.04,-0.03)--(0.05,0)--(0.04,0.03)--(2,1.48);
\fill[black!20] (-2,0.5)--(-0.68,0.5)--(-0.04,0.03)--(-0.05,0)--(-0.04,-0.03)--(-2,-1.48);
\draw[->] (-2.5,0)--(2.5,0) node[anchor=south] {\tiny{$\Re$}};
\draw[->] (0,-1.5)--(0,1.5) node[anchor=west] {\tiny{$\Im$}};;
\draw[thick] (-2,-1.48)--(2,1.48);
\draw[thick] (-2,0.5)--(-0.68,0.5)--(0.68,-0.5)--(2,-0.5);
\draw (0.68,-0.5) arc (-36.86:36.86:0.84) (0.3,0.2) node[anchor=west] {\scriptsize{$\alpha$}};
\draw (-0.68,0.5) arc (143.14:216.86:0.84) (-0.3,0.2) node[anchor=east] {\scriptsize{$\alpha$}};
\draw (0.3,-0.2) node[anchor=west] {\scriptsize{$\alpha$}};
\draw (-0.3,-0.2) node[anchor=east] {\scriptsize{$\alpha$}};
\draw (-0.1,0.5)--(0.1,0.5) (0,0.5) node[anchor=east] {\scriptsize{$h$}};
\draw (-0.1,-0.5)--(0.1,-0.5) (0,-0.5) node[anchor=east] {\scriptsize{-$h$}};
\draw (0.8,0.35) node[anchor=west] {\small{$D_{\alpha,h}$}};
\draw[fill=white] (0,0) circle (0.08);
\end{tikzpicture}
\end{minipage}
\begin{minipage}{0.79\textwidth}
\begin{align}
D_{\alpha,h}\coloneqq&\Set{z\in\mathbb{C}\setminus\{0\} | \Arg(z)\in[-\alpha,\alpha]\text{ and }\Im(z)\geq -h} \label{Eq_Dalphah} \\
&\cup\Set{z\in\mathbb{C}\setminus\{0\} | \Arg(z)\in[\pi-\alpha,\pi+\alpha]\text{ and }\Im(z)\leq h}. \notag \\ \notag
\end{align}
\end{minipage}

\begin{cor}\label{cor_Greensfunction}
Let $G:(0,T)\times\dot{\mathbb{R}}\times\dot{\mathbb{R}}\rightarrow\mathbb{C}$ satisfy the Assumption \ref{ass_Greensfunction} with $S_\alpha$ replaced by $D_{\alpha,h}$ and \eqref{Eq_G_initial} replaced by
\begin{equation}\label{Eq_G_initial_easy}
\lim\limits_{t\rightarrow 0^+}\frac{G(t,x,x)}{\sqrt{a(t)}}=\frac{1}{\sqrt{i\pi}},\qquad x\in\dot{\mathbb{R}}.
\end{equation}
Then the same results as in Theorem \ref{satz_Greensfunction} hold true. I.e. for every $F\in\mathcal{A}_q(\mathbb{C})$, $q\in(0,2)$, the wave function \eqref{Eq_Psi} exists and $\Psi\in\AC_{1,2}((0,T)\times\dot{\mathbb{R}})$ is a solution of the Cauchy problem \eqref{Eq_Psi_Cauchy}. Moreover, if initial conditions $(F_n)_n\in\mathcal{A}_q(\mathbb{C})$ converge as $F_n\overset{n\rightarrow\infty}{\longrightarrow}F$ in $\mathcal{A}_q(\mathbb{C})$, also the corresponding solutions converge as in \eqref{Eq_Psi_convergence}.
\end{cor}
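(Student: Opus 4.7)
The plan is to show that the hypotheses of the corollary imply those of Theorem \ref{satz_Greensfunction}, whence its conclusion applies directly. Since $S_\alpha\subseteq D_{\alpha,h}$, the assumed holomorphicity on $\interior(D_{\alpha,h})$ and the exponential bounds \eqref{Eq_Gtilde_estimates} restrict to the corresponding statements on $S_\alpha$, so Assumption \ref{ass_Greensfunction} parts (i) and (iii) carry over unchanged. The only substantive task is to deduce the integrated initial condition \eqref{Eq_G_initial} from the diagonal pointwise limit \eqref{Eq_G_initial_easy}; the strengthening of the holomorphicity domain from $S_\alpha$ to $D_{\alpha,h}$ is crucial here because it makes $\widetilde{G}(t,x,\cdot)$ holomorphic on a full complex neighborhood of the real point $y=x$.

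Fix $x\in\dot{\mathbb{R}}$, $x_0>|x|$, and $\varphi\in\mathcal{C}^\infty([-x_0,x_0])$. Using the decomposition $G=e^{ia(t)(\cdot-x)^2}\widetilde{G}$ and substituting $s=\sqrt{a(t)}(y-x)$, the integral in \eqref{Eq_G_initial} becomes
$$I(t)=\int_{s_-(t)}^{s_+(t)} e^{is^2}\,g_t(s)\,\varphi\Big(x+\tfrac{s}{\sqrt{a(t)}}\Big)\,ds, \qquad g_t(s):=\frac{\widetilde{G}\big(t,x,x+\tfrac{s}{\sqrt{a(t)}}\big)}{\sqrt{a(t)}},$$
with $s_\pm(t)\to\pm\infty$. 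By \eqref{Eq_G_initial_easy} we have $g_t(0)\to 1/\sqrt{i\pi}$, while Cauchy's integral formula applied to $\widetilde{G}(t,x,\cdot)$ on a fixed small disk around $x$ contained in $\interior(D_{\alpha,h})$, together with \eqref{Eq_Coefficient_extension}, yields $|\partial_y\widetilde{G}(t,x,y)|\leq C\sqrt{a(t)}$ for $y$ near $x$. Consequently $|g_t(s)-g_t(0)|\leq C|s|/\sqrt{a(t)}\to 0$ and $|\varphi(x+s/\sqrt{a(t)})-\varphi(x)|\leq C|s|/\sqrt{a(t)}\to 0$ uniformly on bounded $s$-intervals, so the integrand converges locally uniformly to $e^{is^2}\varphi(x)/\sqrt{i\pi}$.

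It remains to exchange limit and integration. Because the integrand is not dominated by an integrable function, we split $I(t)=\int_{-R}^{R}+\int_{|s|>R}$: on $[-R,R]$ the uniform convergence gives $\int_{-R}^R\to(\varphi(x)/\sqrt{i\pi})\int_{-R}^R e^{is^2}ds$. For the tails we use the non-stationary-phase identity $e^{is^2}=(2is)^{-1}\partial_s e^{is^2}$ and a single integration by parts, done piecewise to avoid the point $s=-x\sqrt{a(t)}$ (corresponding to $y=0$, where $\widetilde{G}$ may jump across the boundary). The resulting boundary terms are $O(1/R)$ uniformly in $t$, and the remainder integrand decays as $s^{-2}$ plus $O(s^{-1}/\sqrt{a(t)})$, the latter using a second Cauchy estimate on $g_t'$. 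Letting first $t\to 0^+$ and then $R\to\infty$, together with the Fresnel identity $\int_\mathbb{R} e^{is^2}ds=\sqrt{i\pi}$, yields $I(t)\to\varphi(x)$, establishing \eqref{Eq_G_initial}. The main obstacle is precisely this interchange: uniformity in $t$ of the tail estimate requires the sharp Cauchy derivative bounds on $\widetilde{G}$ in the complex directions supplied by $D_{\alpha,h}$, something the original sector $S_\alpha$ could not provide. Once \eqref{Eq_G_initial} is secured, Theorem \ref{satz_Greensfunction} delivers the existence of $\Psi$, the Cauchy problem \eqref{Eq_Psi_Cauchy}, and the continuous dependence \eqref{Eq_Psi_convergence}.
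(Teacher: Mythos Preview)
Your strategy---verify Assumption~\ref{ass_Greensfunction}(ii) for general $\varphi\in\mathcal{C}^\infty([-x_0,x_0])$ via a real-variable stationary-phase argument and then invoke Theorem~\ref{satz_Greensfunction}---is different from the paper's, which never verifies (ii) but instead re-proves the initial condition $\lim_{t\to 0^+}\Psi(t,x)=F(x)$ directly by deforming the integration contour into $D_{\alpha,h}$. That deformation exploits the holomorphicity of the initial datum $F\in\mathcal{A}_q(\mathbb{C})$, something not available for a generic $\varphi\in\mathcal{C}^\infty$.

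There is a genuine gap in your tail estimate. You claim a Cauchy bound $|\partial_y\widetilde{G}(t,x,y)|\leq C\sqrt{a(t)}$ and hence $|g_t'(s)|\leq C/\sqrt{a(t)}$, but this is only valid for $y$ in a fixed compact set bounded away from $0$. Near $y=0$ the region $D_{\alpha,h}$ pinches to a sector: the largest disk centred at a real $y>0$ and contained in $D_{\alpha,h}$ has radius $\sim|y|\sin\alpha$, so Cauchy's inequality together with \eqref{Eq_Gtilde_estimate} and \eqref{Eq_Coefficient_extension} only gives $|\partial_y\widetilde{G}(t,x,y)|\lesssim\sqrt{a(t)}/|y|$. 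For $x>0$ the left tail $s\in[s_-(t),-R]$ corresponds to $y\in[-x_0,\,x-R/\sqrt{a(t)}]$, which for small $t$ contains an entire neighbourhood of $y=0$. After your integration by parts the term $g_t'(s)/(2is)$ is then bounded by $C/(|s|\,|x\sqrt{a(t)}+s|)$, and the integral of this over $(-x\sqrt{a(t)},-R]$ diverges logarithmically at $s=-x\sqrt{a(t)}^+$. Splitting the contour ``piecewise to avoid $y=0$'' removes the jump of $\widetilde{G}$ but does nothing about this blow-up of the derivative bound; examples such as the centrifugal Green's function in Section~\ref{sec_Centrifugal_potential} show that $\partial_y\widetilde{G}$ can indeed be singular at $y=0$ under the stated hypotheses.

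The paper sidesteps the issue entirely: it tilts the contour near $y=0$ along the ray $e^{i\alpha}$ (paths $\gamma_1,\gamma_2$ in the proof), so that $e^{ia(t)(z-x)^2}$ acquires genuine Gaussian decay and no $y$-derivative of $\widetilde{G}$ is ever needed. This is only possible because $F$ extends holomorphically; for a general smooth $\varphi$ no such deformation is available, which is why the paper does not attempt to establish (ii) itself.
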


\begin{proof}[Proof of Theorem \ref{satz_Greensfunction}]
First we note, that due to $F\in\mathcal{A}_q(\mathbb{C})$, there exists $A,B\geq 0$ such that
\begin{equation}\label{Eq_F_estimate}
|F(z)|\leq Ae^{B|z|^q},\qquad z\in S_\alpha.
\end{equation}
\textit{Step 1.} In the first step we apply Corollary \ref{cor_Fresnel_integral}, to show that the expression \eqref{Eq_Psi} for the wave function is meaningful and give a representation using Fresnel integrals. For this, we fix $t\in(0,T)$, $x\in\dot{\mathbb{R}}$ and use the estimates \eqref{Eq_Gtilde_estimate} and \eqref{Eq_F_estimate} to get
\begin{align}
|\widetilde{G}(t,x,z)F(z)|&\leq AA_0(t,x)e^{B_0(t,x)|z|^p+B|z|^q} \notag \\
&\leq AA_0(t,x)e^{(B+B_0(t,x))(1+|z|)^{\max\{p,q\}}} \notag \\
&\leq AA_0(t,x)e^{(B+B_0(t,x))2^{\max\{p,q\}}(1+|z|^{\max\{p,q\}})} \notag \\
&=\widetilde{A}_0(t,x)e^{\widetilde{B}_0(t,x)|z|^{\widetilde{p}}},\qquad z\in S_\alpha, \label{Eq_Estimate_Integrand}
\end{align}
where we introduced the new coefficients
\begin{align}
\widetilde{A}_0(t,x)&\coloneqq AA_0(t,x)e^{(B+B_0(t,x))2^{\max\{p,q\}}}, \notag \\
\widetilde{B}_0(t,x)&\coloneqq(B+B_0(t,x))2^{\max\{p,q\}}, \label{Eq_Tilde_coefficients} \\
\widetilde{p}&\coloneqq\max\{p,q\}. \notag
\end{align}
Hence, due to the decomposition \eqref{Eq_G_decomposition}, the assumptions of Corollary \ref{cor_Fresnel_integral} are satisfied, which means that the wave function \eqref{Eq_Psi} exists and admits the absolute integrable representation
\begin{align}
\Psi(t,x)&=\lim\limits_{\varepsilon\rightarrow 0^+}\int_\mathbb{R}e^{-\varepsilon y^2}e^{ia(t)(y-x)^2}\widetilde{G}(t,x,y)F(y)dy \notag \\
&=e^{i\alpha}\int_\mathbb{R}e^{ia(t)(ye^{i\alpha}-x)^2}\widetilde{G}(t,x,ye^{i\alpha})F(ye^{i\alpha})dy \notag \\
&=e^{i\alpha}\int_\mathbb{R}G(t,x,ye^{i\alpha})F(ye^{i\alpha})dy. \label{Eq_Psi_Fresnel}
\end{align}
\textit{Step 2.} We show that the function $\Psi$ in \eqref{Eq_Psi}, is a solution of the Schrödinger equation \eqref{Eq_Psi_Schroedinger}. Roughly speaking, since $G$ is already a solution of \eqref{Eq_G_Schroedinger} by Assumption \ref{ass_Greensfunction} (i), it is sufficient to carry the derivatives inside the integral \eqref{Eq_Psi_Fresnel}.

\medskip

For the first spatial derivative we note, that $G(t,\,\cdot\,,z)\in\AC(\dot{\mathbb{R}})$ for every $t\in(0,T)$, $z\in S_\alpha$ by Assumption \ref{ass_Greensfunction} (i). Hence, for any $x_0,x_1>0$ we have
\begin{equation*}
G(t,x_1,z)=G(t,x_0,z)+\int_{x_0}^{x_1}\frac{\partial}{\partial x}G(t,x,z)dx,\qquad t\in(0,T),\,z\in S_\alpha,
\end{equation*}
which leads to the following integral representation of the wave function \eqref{Eq_Psi_Fresnel}
\begin{equation}\label{Eq_Psix}
\Psi(t,x_1)=\Psi(t,x_0)+e^{i\alpha}\int_\mathbb{R}\int_{x_0}^{x_1}\frac{\partial}{\partial x}G(t,x,ye^{i\alpha})dxF(ye^{i\alpha})dy.
\end{equation}
Using the decomposition \eqref{Eq_G_decomposition}, we can write the derivative as
\begin{equation*}
\frac{\partial}{\partial x}G(t,x,ye^{i\alpha})=\Big(2ia(t)(x-ye^{i\alpha})\widetilde{G}(t,x,ye^{i\alpha})+\frac{\partial}{\partial x}\widetilde{G}(t,x,ye^{i\alpha})\Big)e^{ia(t)(ye^{i\alpha}-x)^2}.
\end{equation*}
Using the estimate \eqref{Eq_Estimate_Integrand} and a similar one for $\frac{\partial}{\partial x}\widetilde{G}(t,x,z)F(z)$ using the coefficients $\widetilde{A}_1(t,x)\coloneqq AA_1(t,x)e^{(B+B_1(t,x))2^{\max\{p,q\}}}$ and $\widetilde{B}_1(t,x)\coloneqq(B+B_1(t,x))2^{\max\{p,q\}}$, we get
\begin{align}
&\Big|\frac{\partial}{\partial x}G(t,x,ye^{i\alpha})F(ye^{i\alpha})\Big| \notag \\
&\hspace{1.5cm}=\Big|2ia(t)(x-ye^{i\alpha})\widetilde{G}(t,x,ye^{i\alpha})+\frac{\partial}{\partial x}\widetilde{G}(t,x,ye^{i\alpha})\Big|\big|e^{ia(t)(ye^{i\alpha}-x)^2}F(ye^{i\alpha})\big| \notag \\
&\hspace{1.5cm}\leq\Big(2|a(t)||x-ye^{i\alpha}|\widetilde{A}_0(t,x)+\widetilde{A}_1(t,x)\Big)e^{-a(t)\sin(2\alpha)y^2}e^{\widetilde{B}_0(t,x)|y|^{\widetilde{p}}+2a(t)\sin(\alpha)|xy|}. \label{Eq_GtildexF_estimate}
\end{align}
Since $\widetilde{A}_0$, $\widetilde{A}_1$, $\widetilde{B}_0$, $\widetilde{B}_1$, $a$ are assumed to be continuous, the right hand side of this estimate is integrable on $[x_0,x_1]$. Additionally, the factor $e^{-a(t)\sin(2\alpha)y^2}$ implies integrability with respect to $y\in\mathbb{R}$. Hence we observe absolute integrability on $[x_0,x_1]\times\mathbb{R}$ and the order of integration in \eqref{Eq_Psix} can be  interchanged by the Fubini theorem, i.e.
\begin{equation*}
\Psi(t,x_1)=\Psi(t,x_0)+e^{i\alpha}\int_{x_0}^{x_1}\int_\mathbb{R}\frac{\partial}{\partial x}G(t,x,ye^{i\alpha})F(ye^{i\alpha})dydx.
\end{equation*}
In particular, this shows $\Psi(t,\,\cdot\,)\big|_{(0,\infty)}\in\AC((0,\infty))$, the $x$-derivative exists almost everywhere and is given by
\begin{equation}\label{Eq_Psix_Fresnel}
\frac{\partial}{\partial x}\Psi(t,x)=e^{i\alpha}\int_\mathbb{R}\frac{\partial}{\partial x}G(t,x,ye^{i\alpha})F(ye^{i\alpha})dy.
\end{equation}
The same is obviously true for $x<0$ and we conclude $\Psi(t,\,\cdot\,)\in\AC(\dot{\mathbb{R}})$. Using the same argument, also $\frac{\partial}{\partial x}\Psi(t,\,\cdot\,)\in\AC(\dot{\mathbb{R}})$ and $\Psi(\,\cdot\,,t)\in\AC((0,T))$, with second spatial derivative and time derivative almost everywhere given by
\begin{align*}
\frac{\partial^2}{\partial x^2}\Psi(t,x)&=e^{i\alpha}\int_\mathbb{R}\frac{\partial^2}{\partial x^2}G(t,x,ye^{i\alpha})F(ye^{i\alpha})dy, \\
\frac{\partial}{\partial t}\Psi(t,x)&=e^{i\alpha}\int_\mathbb{R}\frac{\partial}{\partial t}G(t,x,ye^{i\alpha})F(ye^{i\alpha})dy.
\end{align*}
This means $\Psi\in\AC_{1,2}((0,T)\times\dot{\mathbb{R}})$ and from \eqref{Eq_G_Schroedinger} we conclude, that the Schrödinger equation \eqref{Eq_Psi_Schroedinger} is satisfied for almost every $t\in(0,T)$, $x\in\dot{\mathbb{R}}$.

\medskip

{\it Step 3.} Next we verify the transmission condition \eqref{Eq_Psi_transmission}. If $M=N=0$, there is nothing to do. In the case $M\neq 0$ and $N=0$, we can estimate the integrand of the integral \eqref{Eq_Psi_Fresnel} as
\begin{equation*}
\big|G(t,x,ye^{i\alpha})F(ye^{i\alpha})\big|\leq\widetilde{A}_0(t,x)e^{-a(t)\sin(2\alpha)y^2+2a(t)\sin(\alpha)|xy|+\widetilde{B}_0(t,x)|y|^{\widetilde{p}}},
\end{equation*}
using the decomposition \eqref{Eq_G_decomposition} as well as the estimate \eqref{Eq_Estimate_Integrand}. Since by Assumption \ref{ass_Greensfunction} (iii) and by \eqref{Eq_Tilde_coefficients} the coefficients $\widetilde{A}_0(t,\,\cdot\,)$ and $\widetilde{B}_0(t,\,\cdot\,)$ are bounded as $x\rightarrow 0^\pm$, the right hand side can be replaced by some integrable and $x$-independent majorant, at least in a neighborhood of $x=0$. Hence we can apply the dominated convergence theorem in \eqref{Eq_Psi_Fresnel}, to get the boundary value
\begin{equation}\label{Eq_Psi_boundary_Fresnel}
\Psi(t,0^\pm)=e^{i\alpha}\int_\mathbb{R}G(t,0^\pm,ye^{i\alpha})F(ye^{i\alpha})dy,\qquad t\in(0,T).
\end{equation}
Moreover, the estimate \eqref{Eq_Estimate_Integrand} in the limit $x\rightarrow 0^\pm$ shows that
\begin{equation*}
|\widetilde{G}(t,0^\pm,z)F(z)|\leq\widetilde{A}_0(t)e^{\widetilde{B}_0(t)|z|^{\widetilde{p}}},\qquad t\in(0,T),\,z\in S_\alpha,
\end{equation*}
for upper bounds $\widetilde{A}_0(t)$ and $\widetilde{B}_0(t)$ of $\widetilde{A}_0(t,\,\cdot\,)$ and $\widetilde{B}_0(t,\,\cdot\,)$ in the limit $x\rightarrow 0^\pm$. Due to Corollary~\ref{cor_Fresnel_integral} it is now possible to write the integral \eqref{Eq_Psi_boundary_Fresnel} again in the real valued form
\begin{equation}\label{Eq_Psi_boundary}
\Psi(t,0^\pm)=\lim\limits_{\varepsilon\rightarrow 0^+}\int_\mathbb{R}e^{-\varepsilon y^2}G(t,0^\pm,y)F(y)dy,\qquad t\in(0,T).
\end{equation}
Since the Green's function satisfies the transmission condition \eqref{Eq_G_transmission} with $N=0$, the same equation carries over to the wave function $\Psi$ and we end up with the stated \eqref{Eq_Psi_transmission}.

\medskip

In the situation $N\neq 0$, also the coefficients $\widetilde{A}_1(t,\,\cdot\,)$ and $\widetilde{B}_1(t,\,\cdot\,)$ from the estimate \eqref{Eq_GtildexF_estimate} are bounded as $x\rightarrow 0^\pm$ by Assumption \ref{ass_Greensfunction} (iii). Hence we are allowed to carry the limit $x\rightarrow 0^\pm$ inside the integral \eqref{Eq_Psix_Fresnel} and get
\begin{equation*}
\frac{\partial}{\partial x}\Psi(t,0^\pm)=e^{i\alpha}\int_\mathbb{R}\frac{\partial}{\partial x}G(t,0^\pm,ye^{i\alpha})F(ye^{i\alpha})dy.
\end{equation*}
Since moreover the estimate \eqref{Eq_GtildexF_estimate} in the limit $x\rightarrow 0^\pm$ looks like
\begin{equation*}
\Big|\frac{\partial}{\partial x}G(t,0^\pm,ye^{i\alpha})F(ye^{i\alpha})\Big|\leq\Big(2|a(t)||y|\widetilde{A}_0(t)+\widetilde{A}_1(t)\Big)e^{-a(t)\sin(2\alpha)y^2}e^{\widetilde{B}_0(t)|y|^{\widetilde{p}}},
\end{equation*}
for upper bounds $\widetilde{A}_1(t)$ and $\widetilde{B}_1(t)$ of $\widetilde{A}_1(t,\,\cdot\,)$ and $\widetilde{B}_1(t,\,\cdot\,)$ in the limit $x\rightarrow 0^\pm$. This estimate now allows to apply Corollary \ref{cor_Fresnel_integral} which transforms the integral back onto the real line
\begin{equation}\label{Eq_Psix_boundary}
\frac{\partial}{\partial x}\Psi(t,0^\pm)=\lim\limits_{\varepsilon\rightarrow 0^+}\int_\mathbb{R}e^{-\varepsilon y^2}\frac{\partial}{\partial x}G(t,0^\pm,y)F(y)dy,\qquad t\in(0,T).
\end{equation}
Since we already know by assumption that $G$ satisfies the transmission condition \eqref{Eq_G_transmission}, the integral representations \eqref{Eq_Psi_boundary} and \eqref{Eq_Psix_boundary} show, that it carries over to $\Psi$ and gives \eqref{Eq_Psi_transmission}.

\medskip

\textit{Step 4.} In the next step we verify the initial condition \eqref{Eq_Psi_initial}. To do so, we fix $x\in\dot{\mathbb{R}}$ and with $x_0>|x|$ from Assumption \ref{ass_Greensfunction} (ii) we split up the integral \eqref{Eq_Psi} as
\begin{align}
\Psi(t,x)&=\underbrace{\lim\limits_{\varepsilon\rightarrow 0^+}\int_{-\infty}^{-x_0}e^{-\varepsilon y^2}G(t,x,y)F(y)dy}_{\eqqcolon\Psi_1(t,x)}+\underbrace{\lim\limits_{\varepsilon\rightarrow 0^+}\int_{-x_0}^{x_0}e^{-\varepsilon y^2}G(t,x,y)F(y)dy}_{\eqqcolon\Psi_2(t,x)} \notag \\
&\quad+\underbrace{\lim\limits_{\varepsilon\rightarrow 0^+}\int_{x_0}^\infty e^{-\varepsilon y^2}G(t,x,y)F(y)dy}_{\eqqcolon\Psi_3(t,x)}. \label{Eq_Psi_decomposition}
\end{align}
We will now derive the initial values of this three integrals separately. Starting with $\Psi_3$, it follows from \eqref{Eq_Estimate_Integrand} that the shifted integrand admits the estimate
\begin{equation}\label{Eq_Estimate_shifted_integrand}
|\widetilde{G}(t,x,x_0+z)F(x_0+z)|\leq\widetilde{A}_0(t,x)e^{\widetilde{B}_0(t,x)|x_0+z|^{\widetilde{p}}},\qquad z\in S_\alpha^+,
\end{equation}
and by Proposition \ref{prop_Fresnel_integral} we can write $\Psi_1$ as the Fresnel integral
\begin{align}
\Psi_3(t,x)&=\lim\limits_{\varepsilon\rightarrow 0^+}\int_0^\infty e^{-\varepsilon(x_0+y)^2}G(t,x,x_0+y)F(x_0+y)dy \notag \\
&=e^{i\alpha}\int_0^\infty G(t,x,x_0+ye^{i\alpha})F(x_0+ye^{i\alpha})dy. \label{Eq_Psi3_Fresnel}
\end{align}
Using once more \eqref{Eq_Estimate_shifted_integrand}, this integral can be estimated as
\begin{align}
|\Psi_3(t,x)|&\leq\widetilde{A}_0(t,x)\int_0^\infty e^{-a(t)\sin(2\alpha)y^2-2a(t)\sin(\alpha)(x_0-x)y+\widetilde{B}_0(t,x)(x_0+y)^{\widetilde{p}}}dy \notag \\
&=\frac{\widetilde{A}_0(t,x)}{\sqrt{a(t)}}\int_0^\infty e^{-\sin(2\alpha)y^2-2\sqrt{a(t)}\,\sin(\alpha)(x_0-x)y+\widetilde{B}_0(t,x)\big(x_0+\frac{y}{\sqrt{a(t)}}\big)^{\widetilde{p}}}dy. \label{Eq_Psi3_estimate}
\end{align}
According to \eqref{Eq_Coefficient_extension} and \eqref{Eq_Tilde_coefficients} we know that $\frac{\widetilde{A}_0}{\sqrt{a}}$ and $\widetilde{B}_0$ remain finite in the limit $t\rightarrow 0^+$, and also that $\lim_{t\rightarrow 0^+}a(t)=\infty$. Therefore, since $x_0>x$, the integrand vanishes in the limit $t\rightarrow 0^+$ and so does the whole function
\begin{equation}\label{Eq_Psi3_initial}
\lim\limits_{t\rightarrow 0^+}\Psi_3(t,x)=0.
\end{equation}
For the same reason also
\begin{equation}\label{Eq_Psi1_initial}
\lim\limits_{t\rightarrow 0^+}\Psi_1(t,x)=0.
\end{equation}
For the function $\Psi_2(t,x)$ we first note, that due to the dominated convergence theorem we are allowed to carry the limit $\varepsilon\rightarrow 0^+$ inside the integral and get
\begin{equation}\label{Eq_Psi2_integral}
\Psi_2(t,x)=\int_{-x_0}^{x_0}G(t,x,y)F(y)dy.
\end{equation}
Since $F\in\mathcal{A}_q(\mathbb{C})$ is an entire function, it is in particular $F\in\mathcal{C}^\infty(\mathbb{R})$ and the initial value
\begin{equation}\label{Eq_Psi2_initial}
\lim\limits_{t\rightarrow 0^+}\Psi_2(t,x)=F(x)
\end{equation}
follows from assumption \eqref{Eq_G_initial}. Combining now \eqref{Eq_Psi3_initial}, \eqref{Eq_Psi1_initial} and \eqref{Eq_Psi2_initial} gives the initial value \eqref{Eq_Psi_initial} of $\Psi(t,x)$ and hence finishes Step 4 of the proof.

\medskip

\textit{Step 5.} It is left to check the continuous dependency \eqref{Eq_Psi_convergence} of the wave function on the initial condition. According to the $\mathcal{A}_q$-convergence \eqref{Eq_Ap_convergence} of the initial condition, we define the coefficients
\begin{equation*}
A_n\coloneqq\sup\limits_{z\in\mathbb{C}}|F(z)-F_n(z)|e^{-B|z|^q},\qquad n\in\mathbb{N},
\end{equation*}
for which obviously
\begin{equation}\label{Eq_An_convergence}
\lim\limits_{n\rightarrow\infty}A_n=0\qquad\text{and}\qquad|F(z)-F_n(z)|\leq A_ne^{B|z|^q},\qquad z\in\mathbb{C},
\end{equation}
holds true. Let $K\subseteq\dot{\mathbb{R}}$ be compact, $x_0>0$ such that $K\subseteq[-x_0,x_0]$ and for every $t\in(0,T)$, $x\in K$ we split up the $\Psi$-integral as in \eqref{Eq_Psi_decomposition}. We will now prove the continuous dependency \eqref{Eq_Psi_convergence} for the three parts of the wave function separately. Rewriting the $\Psi_3$-integral as in \eqref{Eq_Psi3_Fresnel} and estimate the difference as in \eqref{Eq_Psi3_estimate} gives
\begin{align*}
|\Psi_3(t,x;F)-\Psi_3(t,x;F_n)|&=\Big|e^{i\alpha}\int_0^\infty G(t,x,x_0+ye^{i\alpha})\big(F(x_0+ye^{i\alpha})-F_n(x_0+ye^{i\alpha})\big)dy\Big| \\
&\leq\frac{\widetilde{A}_n(t,x)}{\sqrt{a(t)}}\int_0^\infty e^{-\sin(2\alpha)y^2-2\sqrt{a(t)}(x_0-x)\sin(\alpha)y+\widetilde{B}_0(t,x)\big(x_0+\frac{y}{\sqrt{a(t)}}\big)^{\widetilde{p}}}dy,
\end{align*}
using the similar coefficient $\widetilde{A}_n(t,x)\coloneqq A_nA_0(t,x)e^{(B+B_0(t,x))2^{\max\{p,q\}}}$ as in \eqref{Eq_Tilde_coefficients}. Since $A_n\overset{n\rightarrow\infty}{\longrightarrow}0$ by \eqref{Eq_An_convergence} this estimate proves
\begin{equation*}
\lim\limits_{n\rightarrow\infty}\Psi_3(t,x;F_n)=\Psi_3(t,x;F).
\end{equation*}
Since moreover $A_0(t,\,\cdot\,)$ and $B_0(t,\,\cdot\,)$ are continuous on $\dot{\mathbb{R}}$ by Assumption \ref{ass_Greensfunction} (iii), this convergence is uniform with respect to $x\in K$. Following the same arguments, one also obtains the convergence
\begin{equation*}
\lim\limits_{n\rightarrow\infty}\Psi_1(t,x;F_n)=\Psi_1(t,x;F).
\end{equation*}
Finally, carrying the limit $\varepsilon\rightarrow 0^+$ inside the $\Psi_2$-integral, as in \eqref{Eq_Psi2_integral}, we can estimate the difference
\begin{align*}
|\Psi_2(t,x;F)-\Psi_2(t,x;F_n)|&=\Big|\int_{-x_0}^{x_0}G(t,x,y)(F(y)-F_n(y))dy\Big| \\
&\leq A_nA_0(t,x)\int_{-x_0}^{x_0}e^{B_0(t,x)|y|^p+B|y|^q}dy \\
&\leq 2x_0A_nA_0(t,x)e^{B_0(t,x)x_0^p+Bx_0^q}.
\end{align*}
Also here, $A_n\overset{n\rightarrow\infty}{\longrightarrow}0$ implies
\begin{equation*}
\lim\limits_{n\rightarrow\infty}\Psi_2(t,x;F_n)=\Psi_2(t,x;F),
\end{equation*}
uniform with respect to $x\in K$. This verifies the convergence \eqref{Eq_Psi_convergence} and finishes the proof.
\end{proof}

Next we prove Corollary \ref{cor_Greensfunction}, where we use the simplified initial condition \eqref{Eq_G_initial_easy} instead of \eqref{Eq_G_initial}.

\begin{proof}[Proof of Corollary \ref{cor_Greensfunction}]
The fact, that the wave function \eqref{Eq_Psi} exists and $\Psi\in\AC_{1,2}((0,T)\times\dot{\mathbb{R}})$ is a solution of \eqref{Eq_Psi_Schroedinger} and \eqref{Eq_Psi_transmission} is the same as in the proof of Theorem \ref{satz_Greensfunction}. Also the continuous dependency result \eqref{Eq_Psi_convergence} can be proven in the same way.

\medskip

The only thing to check is the initial condition \eqref{Eq_Psi_initial}. Since the calculation is principally the same for $x<0$, we only consider $x>0$ here. First of all, we generalize \eqref{Eq_G_initial_easy} in the sense that for any $z(t)\in D_{\alpha,h}$ with $\lim\limits_{t\rightarrow 0^+}z(t)=x$ we have
\begin{equation}\label{Eq_Gtilde_initial_z}
\lim\limits_{t\rightarrow 0^+}\frac{\widetilde{G}(t,x,z(t))}{\sqrt{a(t)}}=\frac{1}{\sqrt{i\pi}}.
\end{equation}
Consider an open ball $B_r(x)$ with radius $0<r<\min\{h,x\sin(\alpha)\}$ around $x$. Then this ball is obviously contained in the interior $\interior(D_{\alpha,h})$ and we are allowed to apply the Cauchy integral formula to write
\begin{align*}
\widetilde{G}(t,x,z(t))-\widetilde{G}(t,x,x)&=\frac{1}{2\pi i}\int_{|z-x|=r}\Big(\frac{\widetilde{G}(t,x,z)}{z-z(t)}-\frac{\widetilde{G}(t,x,z)}{z-x}\Big)dz \\
&=\frac{z(t)-x}{2\pi i}\int_{|z-x|=r}\frac{\widetilde{G}(t,x,z)}{(z-z(t))(z-x)}dz \\
&=\frac{z(t)-x}{2\pi}\int_0^{2\pi}\frac{\widetilde{G}(t,x,x+re^{i\theta})}{x+re^{i\theta}-z(t)}d\theta.
\end{align*}
Using \eqref{Eq_Gtilde_estimate}, we can estimate the integrand to get
\begin{align*}
\big|\widetilde{G}(t,x,z(t))-\widetilde{G}(t,x,x)\big|&\leq\frac{A_0(t,x)|z(t)-x|}{2\pi}\int_0^{2\pi}\frac{e^{B_0(t,x)|x+re^{i\theta}|^p}}{|x+re^{i\theta}-z(t)|}d\theta \\
&\leq\frac{A_0(t,x)|z(t)-x|}{r-|z(t)-x|}e^{B_0(t,x)(|x|+r)^p}.
\end{align*}
Since $\frac{A_0(t,x)}{\sqrt{a(t)}}$ and $B_0(t,x)$ are bounded as $t\rightarrow 0^+$ and $\lim_{t\rightarrow 0^+}z(t)=x$, it follows, that
\begin{equation*}
\lim\limits_{t\rightarrow 0^+}\frac{|\widetilde{G}(t,x,z(t))-\widetilde{G}(t,x,x)|}{\sqrt{a(t)}}=0.
\end{equation*}
With \eqref{Eq_G_initial_easy} and the decomposition \eqref{Eq_G_decomposition}, we then obtain the limit \eqref{Eq_Gtilde_initial_z}, namely
\begin{equation*}
\lim\limits_{t\rightarrow 0^+}\frac{\widetilde{G}(t,x,z(t))}{\sqrt{a(t)}}=\lim\limits_{t\rightarrow 0^+}\frac{\widetilde{G}(t,x,x)}{\sqrt{a(t)}}=\lim\limits_{t\rightarrow 0^+}\frac{G(t,x,x)}{\sqrt{a(t)}}=\frac{1}{\sqrt{i\pi}}.
\end{equation*}
Next, we split up the integral \eqref{Eq_Psi} as
\begin{align*}
\Psi(t,x)&=e^{i\alpha}\int_{-\infty}^0G(t,x,ye^{i\alpha})F(ye^{i\alpha})dy+\int_0^xG(t,x,y)F(y)dy \\
&\hspace{3cm}+e^{i\alpha}\int_0^\infty G(t,x,x+ye^{i\alpha})F(x+ye^{i\alpha})dy,
\end{align*}
which is the same decomposition as in \eqref{Eq_Psi_decomposition}, with the interval $[-x_0,x_0]$ replaced by $[0,x]$, and the subsequent rewriting of the integrals \eqref{Eq_Psi3_Fresnel} and \eqref{Eq_Psi2_integral}. Using the Cauchy theorem, we change the integration path $0\rightarrow x$ once more to $0\rightarrow x-\delta e^{i\alpha}\rightarrow x$ where $\delta>0$ is small enough such that $x-\delta e^{i\alpha}\in\interior(D_{\alpha,h})$. I.e. we end up with the representation
\begin{equation}\label{Eq_Psi_Fresnel_decomposition}
\Psi(t,x)=\int_{\gamma_1}G(t,x,z)F(z)dz+\int_{\gamma_2}G(t,x,z)F(z)dz+\int_{\gamma_3}G(t,x,z)F(z)dz,
\end{equation}
using the three complex paths

\begin{minipage}{0.44\textwidth}
\begin{align*}
\gamma_1&\coloneqq\Set{ye^{i\alpha} | y\leq 0}, \\
\gamma_2&\coloneqq\Set{s(x-\delta e^{i\alpha}) | 0\leq s\leq 1}, \\
\gamma_3&\coloneqq\Set{x+ye^{i\alpha} | y\geq-\delta}.
\end{align*}
\end{minipage}
\begin{minipage}{0.55\textwidth}
\begin{tikzpicture}[scale=0.9]
\fill[black!20] (5,-1)--(1.36,-1)--(0,0)--(2.5,1.85);
\fill[black!20] (-2,1)--(-1.36,1)--(0,0)--(-2,-1.48);
\draw[ultra thick,->] (-2,-1.48)--(0,0);
\draw[ultra thick,->] (0,0)--(2,-0.74);
\draw[ultra thick,->] (2,-0.74)--(4.5,1.11);
\draw[->] (-2.5,0)--(5,0) node[anchor=south] {\scriptsize{$\Re$}};
\draw[->] (0,-1.5)--(0,1.8) node[anchor=west] {\scriptsize{$\Im$}};
\draw[thick] (-2,-1.48)--(2.5,1.85);
\draw[thick] (-2,1)--(-1.36,1)--(1.36,-1)--(5,-1);
\draw (4,0) arc (0:36.86:1) (3.35,0.2) node[anchor=west] {$\alpha$};
\draw (-1,0) arc (180:216.86:1) (-0.35,-0.2) node[anchor=east] {$\alpha$};
\draw (-0.1,1)--(0.1,1) (0,1) node[anchor=east] {$h$};
\draw (-0.1,-1)--(0.1,-1) (0,-1) node[anchor=east] {-$h$};
\draw (1.4,0.7) node[anchor=west] {\large{$D_{\alpha,h}$}};
\draw (3,-0.1)--(3,0.1) node[anchor=south] {$x$};
\draw (-1,-0.6) node[anchor=east] {\large{$\gamma_1$}};
\draw (1.8,-0.25) node[anchor=east] {\large{$\gamma_2$}};
\draw (4,0.8) node[anchor=east] {\large{$\gamma_3$}};
\draw[fill=white] (0,0) circle (0.06);
\draw[fill=white] (2,-0.74) circle (0.06) node[anchor=west] {$x$-$\delta e^{i\alpha}$};
\end{tikzpicture}
\end{minipage}

\medskip

We will now consider all three integrals separately. Starting with the integral along $\gamma_1$, the estimate \eqref{Eq_Estimate_Integrand} with the coefficients \eqref{Eq_Tilde_coefficients} yields the estimate
\begin{align*}
\Big|\int_{\gamma_1}G(t,x,z)F(z)dz\Big|&=\Big|e^{i\alpha}\int_{-\infty}^0e^{ia(t)(ye^{i\alpha}-x)^2}\widetilde{G}(t,x,ye^{i\alpha})F(ye^{i\alpha})dy\Big| \\
&\leq\widetilde{A}_0(t,x)\int_{-\infty}^0e^{-a(t)\sin(2\alpha)y^2+2a(t)\sin(\alpha)xy+\widetilde{B}_0(t,x)|y|^{\widetilde{p}}}dy \\
&=\frac{\widetilde{A}_0(t,x)}{\sqrt{a(t)}}\int_{-\infty}^0e^{-\sin(2\alpha)y^2-2\sqrt{a(t)}\sin(\alpha)x|y|+\widetilde{B}_0(t,x)\big(\frac{|y|}{\sqrt{a(t)}}\big)^{\widetilde{p}}}dy.
\end{align*}
According to \eqref{Eq_Coefficient_extension} and \eqref{Eq_Tilde_coefficients}, we know that $\frac{\widetilde{A}_0}{\sqrt{a}}$ and $\widetilde{B}_0$ remain finite in the limit $t\rightarrow 0^+$, and also that $\lim_{t\rightarrow 0^+}a(t)=\infty$. Therefore, since $x>0$, the integrand vanishes in the limit $t\rightarrow 0^+$ and so does the whole integral
\begin{equation}\label{Eq_gamma1_initial}
\lim\limits_{t\rightarrow 0^+}\int_{\gamma_1}G(t,x,z)F(z)dz=0.
\end{equation}
For the integral along $\gamma_2$ we use the estimate \eqref{Eq_Estimate_Integrand} to get
\begin{align*}
\Big|\int_{\gamma_2}G(t,x,z)F(z)dz\Big|&=\Big|(x-\delta e^{i\alpha})\int_0^1G\big(t,x,s(x-\delta e^{i\alpha})\big)F\big(s(x-\delta e^{i\alpha})\big)ds\Big| \\
&\leq\widetilde{A}_0(t,x)|x-\delta e^{i\alpha}|\int_0^1e^{-2\delta a(t)\sin(\alpha)s\big(x-s\Re(x-\delta e^{i\alpha})\big)+\widetilde{B}_0(t,x)|s(x-\delta e^{i\alpha})|^{\widetilde{p}}}ds \\
&\leq\widetilde{A}_0(t,x)|x-\delta e^{i\alpha}|e^{\widetilde{B}_0(t,x)|x-\delta e^{i\alpha}|^{\widetilde{p}}}\int_0^\infty e^{-2\delta a(t)\sin(\alpha)s\big(x-\Re(x-\delta e^{i\alpha})\big)}ds \\
&=\frac{\widetilde{A}_0(t,x)|x-\delta e^{i\alpha}|}{\delta^2a(t)\sin(2\alpha)}e^{\widetilde{B}_0(t,x)|x-\delta e^{i\alpha}|^{\widetilde{p}}},
\end{align*}
where we used in the estimate of the third line that $\Re(x-\delta e^{i\alpha})>0$. Since again $\frac{\widetilde{A}_0}{\sqrt{a}}$, $\widetilde{B}_0$ are bounded as $t\rightarrow 0^+$ and $\lim_{t\rightarrow 0^+}a(t)=\infty$, also
\begin{equation}\label{Eq_gamma2_initial}
\lim\limits_{t\rightarrow 0^+}\int_{\gamma_2}G(t,x,z)F(z)dz=0.
\end{equation}
Finally, the integral along $\gamma_3$ can be parametrized as
\begin{align*}
\int_{\gamma_3}G(t,x,z)F(z)dz&=e^{i\alpha}\int_{-\delta}^\infty G(t,x,x+ye^{i\alpha})F(x+ye^{i\alpha})dy \\
&=\frac{e^{i\alpha}}{\sqrt{a(t)}}\int_{-\delta\sqrt{a(t)}}^\infty G\Big(t,x,x+\frac{ye^{i\alpha}}{\sqrt{a(t)}}\Big)F\Big(x+\frac{ye^{i\alpha}}{\sqrt{a(t)}}\Big)dy.
\end{align*}
Again by \eqref{Eq_Estimate_Integrand}, the integrand can be estimated as
\begin{equation*}
\Big|G\Big(t,x,x+\frac{ye^{i\alpha}}{\sqrt{a(t)}}\Big)F\Big(x+\frac{ye^{i\alpha}}{\sqrt{a(t)}}\Big)\Big|\leq\frac{\widetilde{A}_0(t,x)}{\sqrt{a(t)}}e^{-y^2\sin(2\alpha)+\widetilde{B}_0(t,x)\big|x+\frac{ye^{i\alpha}}{\sqrt{a(t)}}\big|^{\widetilde{p}}}.
\end{equation*}
Once more, $\frac{\widetilde{A}_0}{\sqrt{a}}$, $\widetilde{B}_0$ are bounded as $t\rightarrow 0^+$ and $\lim_{t\rightarrow 0^+}a(t)=\infty$, and hence this upper bound can be made $t$-independent in a neighborhood of $t=0^+$. By the dominated convergence theorem we are then allowed to carry the limit $t\rightarrow 0^+$ inside the integral, and using also \eqref{Eq_Gtilde_initial_z} the $\gamma_3$-integral then becomes
\begin{align}
\lim\limits_{t\rightarrow 0^+}\int_{\gamma_3}G(t,x,z)F(z)dz&=e^{i\alpha}\int_\mathbb{R}\lim\limits_{t\rightarrow 0^+}\frac{1}{\sqrt{a(t)}}G\Big(t,x,x+\frac{ye^{i\alpha}}{\sqrt{a(t)}}\Big)F\Big(x+\frac{ye^{i\alpha}}{\sqrt{a(t)}}\Big)dy \notag \\
&=\frac{e^{i\alpha}}{\sqrt{i\pi}}\int_\mathbb{R}e^{iy^2e^{2i\alpha}}F(x)dy=F(x). \label{Eq_gamma3_initial}
\end{align}
Combining now \eqref{Eq_gamma1_initial}, \eqref{Eq_gamma2_initial} and \eqref{Eq_gamma3_initial} gives the initial value \eqref{Eq_Psi_initial}.
\end{proof}

\section{Stability of superoscillations and supershifts}\label{sec_Stability_of_superoscillations_and_supershifts}

It is a question almost as old as superoscillations itself: What happens to a superoscillating function as it evolves in time, when interacting with some quantum mechanical system? In other words: What happens if we put some superoscillatory sequence $(F_n)_n$ as initial condition of the time dependent Schrödinger equation \eqref{Eq_Psi_Cauchy}? Will the sequence of solutions $(\Psi(t,x;F_n))_n$ still be superoscillating at times $t>0$?

\medskip

Motivated by the example \eqref{Eq_Example}, the two defining properties of superoscillations are on the one hand the decomposition into plane waves
\begin{equation}\label{Eq_Fn}
F_n(z)=\sum\limits_{j=0}^nC_j(n)e^{ik_j(n)z},\qquad z\in\mathbb{C},
\end{equation}
with coefficients $C_j(n)\in\mathbb{C}$ and frequencies $k_j(n)\in[-k_0,k_0]$, for some $k_0>0$. On the other hand the functions converge as
\begin{equation}\label{Eq_Fn_convergence}
\lim\limits_{n\rightarrow\infty}F_n(z)=e^{ikz},\qquad\text{in }\mathcal{A}_1(\mathbb{C}),
\end{equation}
to some plane wave with frequency $k\in\mathbb{R}\setminus[-k_0,k_0]$. The continuous dependency result of Theorem~\ref{satz_Greensfunction} now already indicates some convergence
\begin{equation}\label{Eq_Psin_convergence}
\lim\limits_{n\rightarrow\infty}\Psi(t,x;F_n)=\Psi(t,x;e^{ik\,\cdot\,}).
\end{equation}
However, although the limit function $\Psi(t,x;e^{ik\,\cdot\,})$ may admit some oscillatory behaviour, it is by no means expected (and also not true) that it is again a plane wave $e^{ik(t)x}$. Also, since Theorem \ref{satz_Greensfunction} only gives uniform convergence on compact subsets of $\dot{\mathbb{R}}$, the desired $\mathcal{A}_1$-convergene in the variable $x$ may in general fail. Already the extension to a holomorphic function will not be possible in general, since it is intuitively clear, that some potential, having for example discontinuities, leads to a wave function with is no longer holomorphic although the initial value was.

\medskip

To overcome this dilemma, the notion of \textit{supershift} was introduced in \cite{CSSY19_1} and appeared in slightly different variations in subsequent publications. In the following we will stick to the one in \cite{ABCS21_2}, namely motivated by \eqref{Eq_Fn} and \eqref{Eq_Fn_convergence} we define:

\medskip

\begin{defi}[Supershift]\label{defi_Supershift}
Let $\mathcal{O},\mathcal U\subseteq\mathbb{C}$ with $\mathcal{U}\subsetneqq\mathcal{O}$ and $X$ be a metric space. Consider
\begin{equation}\label{Eq_varphi_kappa}
\varphi_\kappa:X\rightarrow\mathbb C, \qquad \kappa\in\mathcal O,
\end{equation}
a family of complex valued functions. We say that a sequence of functions $(\Phi_n)_n$ of the form
\begin{equation}\label{Eq_Supershift_function}
\Phi_n(s)=\sum\limits_{l=0}^nC_l(n)\varphi_{\kappa_l(n)}(s)
,\qquad s\in X,
\end{equation}
with coefficients $C_l(n)\in\mathbb{C}$, $\kappa_l(n)\in\mathcal U$, admits a \emph{supershift}, if there exists $\kappa\in\mathcal{O}\setminus\mathcal U$, such that
\begin{equation}\label{Eq_Supershift_convergence}
\lim\limits_{n\rightarrow\infty}\Phi_n(s)=\varphi_\kappa(s),\qquad s\in X,
\end{equation}
converges uniformly on compact subsets of $X$.
\end{defi}

\medskip

This definition of supershift mainly means, that we forget about the oscillatory behaviour of the plane waves $e^{i\kappa z}$ and replace them by some arbitrary functions $\varphi_\kappa(z)$. Also the $\mathcal{A}_1$-convergence \eqref{Eq_Fn_convergence} gets weakened to the uniform convergence on compact sets \eqref{Eq_Supershift_convergence}. If the used function $\varphi_\kappa$ admit any kind of oscillatory behaviour may be investigated for each potential independently.

\medskip

\begin{bem}
Since we introduced the supershift property with the aim to generalize superoscillations, it is obligatory that the example functions \eqref{Eq_Example} fit into Definition~\ref{defi_Supershift}. But this is indeed the case, choosing $X=\mathbb{C}$, $\mathcal{O}=\mathbb{R}$, $\mathcal{U}=[-1,1]$ and $\varphi_\kappa(z)=e^{i\kappa z}$, see also \cite[Example 4.3]{ABCS21_2}.
\end{bem}

\medskip

The first main result of this section is Theorem \ref{satz_Supershift_property} on the supershift property of the solution of the Schrödinger equation \eqref{Eq_Psi_Cauchy}, which can be viewed as a corollary of the continuous dependence result of Theorem \ref{satz_Greensfunction}. Roughly speaking, we consider a family of initial conditions that admit a supershift (with respect to a slightly stronger form of convergence as in Definition \ref{defi_Supershift}) and conclude that the corresponding solutions of the Schrödinger equation admit a similar type of supershift, see also Remark \ref{bem_Supershift}.

\medskip

\begin{satz}\label{satz_Supershift_property}
Let the function $G$ be as in Assumption \ref{ass_Greensfunction} and $\mathcal{O},\mathcal{U}\subseteq\mathbb{C}$ with $\mathcal{U}\subsetneqq\mathcal{O}$. For some $q\in(0,2)$ consider a family of functions $\varphi_\kappa\in\mathcal{A}_q(\mathbb{C})$, $\kappa\in\mathcal{O}$ and a sequence
\begin{equation}\label{Eq_Initial_Fn}
F_n(z)=\sum\limits_{l=0}^nC_l(n)\varphi_{\kappa_l(n)}(z),\qquad z\in\mathbb{C},
\end{equation}
of initial conditions with coefficients $C_l(n)\in\mathbb{C}$, $\kappa_l(n)\in\mathcal U$. If now
\begin{equation}\label{Eq_phi_kappa_convergence}
\lim\limits_{n\rightarrow\infty}F_n=\varphi_\kappa\quad\text{in }\mathcal{A}_q(\mathbb{C}),
\end{equation}
for some $\varphi_\kappa$ with $\kappa\in\mathcal{O}\setminus\mathcal{U}$, then the solutions of the Cauchy problem \eqref{Eq_Psi_Cauchy} converges as
\begin{equation}\label{Eq_Psi_kappa_convergence}
\lim\limits_{n\rightarrow\infty}\Psi(t,x;F_n)=\lim\limits_{n\rightarrow\infty}\sum\limits_{l=0}^nC_l(n)\Psi(t,x;\varphi_{\kappa_l(n)})=\Psi(t,x;\varphi_\kappa),
\end{equation}
for every $t\in(0,T)$ and uniformly on compact subsets $\dot{\mathbb{R}}$.
\end{satz}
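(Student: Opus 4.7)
The statement is essentially a direct corollary of Theorem \ref{satz_Greensfunction}, so my plan is to package the two relevant pieces of that theorem, namely well-posedness of the map $F\mapsto\Psi(\,\cdot\,,\,\cdot\,;F)$ on $\mathcal{A}_q(\mathbb{C})$ and its continuity, into the required convergence. First I would observe that for every fixed $l$ and $n$ the function $\varphi_{\kappa_l(n)}$ belongs to $\mathcal{A}_q(\mathbb{C})$ by assumption, and the finite linear combination $F_n$ in \eqref{Eq_Initial_Fn} therefore also belongs to $\mathcal{A}_q(\mathbb{C})$, since $\mathcal{A}_q(\mathbb{C})$ is a vector space. Consequently Theorem \ref{satz_Greensfunction} applies to each of the initial data $F_n$, $\varphi_{\kappa_l(n)}$ and $\varphi_\kappa$, so that the wave functions $\Psi(t,x;F_n)$, $\Psi(t,x;\varphi_{\kappa_l(n)})$ and $\Psi(t,x;\varphi_\kappa)$ are all well defined via \eqref{Eq_Psi}.

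Next I would establish the linearity identity
\begin{equation*}
\Psi(t,x;F_n)=\sum\limits_{l=0}^nC_l(n)\Psi(t,x;\varphi_{\kappa_l(n)}),\qquad t\in(0,T),\,x\in\dot{\mathbb{R}}.
\end{equation*}
This is the only tiny point that requires a word of justification: the definition \eqref{Eq_Psi} is a limit of $\varepsilon$-regularized integrals, so I cannot naively pull the finite sum out of the limit. However, for every fixed $\varepsilon>0$ the integral $\int_\mathbb{R}e^{-\varepsilon y^2}G(t,x,y)F_n(y)dy$ is absolutely convergent (the decomposition \eqref{Eq_G_decomposition} combined with the estimate \eqref{Eq_Gtilde_estimate} and the growth of $F_n$ does the job uniformly in the sum since the sum is finite), and there the linearity is just the linearity of the Lebesgue integral. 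Passing $\varepsilon\to 0^+$ in this finite sum and using Theorem \ref{satz_Greensfunction} to secure that each summand admits a limit, the identity follows.

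The final step is to apply the continuous dependency \eqref{Eq_Psi_convergence} of Theorem \ref{satz_Greensfunction}. Since by assumption $F_n\rightarrow\varphi_\kappa$ in $\mathcal{A}_q(\mathbb{C})$, that theorem immediately gives $\Psi(t,x;F_n)\rightarrow\Psi(t,x;\varphi_\kappa)$ for every $t\in(0,T)$, uniformly on compact subsets of $\dot{\mathbb{R}}$. Combined with the linearity identity above, this produces exactly the chain \eqref{Eq_Psi_kappa_convergence}.

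There is really no hard step here; the proof is a bookkeeping argument. If anything, the point most worth writing out carefully is the justification that the finite-sum identity survives the limit $\varepsilon\to 0^+$, which is not a hidden obstacle but merely a place where one must cite Theorem \ref{satz_Greensfunction} once more to guarantee existence of each summand's limit. No further estimates beyond those already developed in the proof of Theorem \ref{satz_Greensfunction} are needed.
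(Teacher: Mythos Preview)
Your proposal is correct and follows essentially the same approach as the paper: invoke the continuous dependence statement of Theorem~\ref{satz_Greensfunction} for the convergence, and use linearity of $F\mapsto\Psi(\,\cdot\,,\,\cdot\,;F)$ for the middle identity in \eqref{Eq_Psi_kappa_convergence}. If anything, your justification of the linearity identity via the $\varepsilon$-regularized integrals is more detailed than the paper's, which simply appeals to the linearity of the Schr\"odinger equation with respect to the initial condition.
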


\medskip

\begin{bem}\label{bem_Supershift}
Since the convergence \eqref{Eq_phi_kappa_convergence} implies uniform convergence on all compact subsets of $\mathbb{R}$, it is clear, that the initial conditions $(F_n)_n$ in \eqref{Eq_Initial_Fn} admit the supershift property of Definition~\ref{defi_Supershift} in the space $X=\dot{\mathbb{R}}$. Furthermore, the convergence \eqref{Eq_Psi_kappa_convergence} shows that for every $t\in(0,T)$ also the solutions $\Psi(t,\,\cdot\;;F_n)$ admit a supershift with respect to the functions $\phi_\kappa(t,x)\coloneqq\Psi(t,x;\varphi_\kappa)$.
\end{bem}

\begin{proof}[Proof of Theorem \ref{satz_Supershift_property}]
The fact, that the convergence \eqref{Eq_phi_kappa_convergence} leads to the convergence \eqref{Eq_Psi_kappa_convergence}, was already subject of Theorem \ref{satz_Greensfunction}. Moreover, splitting the solutions $\Psi(t,x;F_n)$ into the given linear combination is allowed due to the linearity of the Schrödinger equation with respect to the initial condition, i.e.
\begin{equation*}
\Psi(t,x;F_n)=\Psi\Big(t,x;\sum\limits_{l=0}^nC_l(n)\varphi_{\kappa_l(n)}\Big)=\sum\limits_{l=0}^nC_l(n)\Psi(t,x;\varphi_{\kappa_l(n)}). \qedhere
\end{equation*}
\end{proof}

If the sequence $(\Phi_n)_n$ in \eqref{Eq_Supershift_function} admits a supershift, then the values of $\varphi_\kappa$ for some $\kappa\in\mathcal O$, outside the smaller set $\mathcal{U}$, can be calculated by \eqref{Eq_Supershift_convergence} only using values $\varphi_{\kappa_l(n)}$ at the points $\kappa_l(n)$ inside $\mathcal U$. Hence, informally speaking, when considering the mapping $\kappa\mapsto\varphi_\kappa$ in the $\kappa$-variable, this feels like a property of analytic functions. Although we are not able to prove that the supershift property implies analyticity in general, the following Theorem \ref{satz_Analyticity} at least shows, that analyticity in the $\kappa$-variable of the initial condition implies analyticity in the $\kappa$-variable of the wave function.

\medskip

\begin{satz}\label{satz_Analyticity}
Let $G$ be as in Assumption \ref{ass_Greensfunction}. For some open set $\mathcal{O}\subseteq\mathbb{C}$ and $q\in(0,2)$, we consider a family of functions $\varphi_\kappa\in\mathcal{A}_q(\mathbb{C})$, $\kappa\in\mathcal{O}$, such that
\begin{equation}\label{Eq_varphi_estimate}
|\varphi_\kappa(z)|\leq A(\kappa)e^{B(\kappa)|z|^q},\qquad z\in\mathbb{C},
\end{equation}
is satisfied for $A(\kappa),B(\kappa)\geq 0$ continuously depending on $\kappa$. If for every $z\in S_\alpha$ the mapping
\begin{equation*}
\mathcal{O}\ni\kappa\mapsto\varphi_\kappa(z)
\end{equation*}
is holomorphic, then for every fixed $t\in(0,T)$, $x\in\dot{\mathbb{R}}$, the mapping
\begin{equation*}
\mathcal{O}\ni\kappa\mapsto\Psi(t,x;\varphi_\kappa)
\end{equation*}
is holomorphic as well, where $\Psi(t,x;\varphi_\kappa)$ is the solution of the Cauchy problem \eqref{Eq_Psi_Cauchy} with initial value $\varphi_\kappa$.
\end{satz}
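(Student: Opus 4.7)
The plan is to exploit the Fresnel integral representation
\begin{equation*}
\Psi(t,x;\varphi_\kappa)=e^{i\alpha}\int_\mathbb{R}G(t,x,ye^{i\alpha})\varphi_\kappa(ye^{i\alpha})dy
\end{equation*}
established in Step 1 of the proof of Theorem \ref{satz_Greensfunction} and to apply the standard criterion for holomorphy of integrals depending on a parameter (Morera combined with Fubini, or, equivalently, differentiation under the integral sign). To this end I would fix $t\in(0,T)$ and $x\in\dot{\mathbb{R}}$ throughout, so that the integrand may be regarded as a function of $(\kappa,y)\in\mathcal{O}\times\mathbb{R}$.

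First, pointwise holomorphy in $\kappa$ for each $y\in\mathbb{R}$ is immediate from the hypothesis that $\kappa\mapsto\varphi_\kappa(z)$ is holomorphic for every $z\in S_\alpha$, since the factor $G(t,x,ye^{i\alpha})$ is independent of $\kappa$. Second, to obtain a $\kappa$-uniform integrable majorant, I would fix an arbitrary $\kappa_0\in\mathcal{O}$ and choose a closed disk $\overline{B_r(\kappa_0)}\subseteq\mathcal{O}$. On this compact set the continuous functions $A(\kappa),B(\kappa)$ from \eqref{Eq_varphi_estimate} are bounded by constants $A_*,B_*$, so repeating the estimate chain culminating in \eqref{Eq_Estimate_Integrand} with $A,B$ replaced by $A_*,B_*$ gives
\begin{equation*}
\bigl|G(t,x,ye^{i\alpha})\varphi_\kappa(ye^{i\alpha})\bigr|\leq\widetilde{A}_*(t,x)e^{-a(t)\sin(2\alpha)y^2+2a(t)\sin(\alpha)|xy|+\widetilde{B}_*(t,x)|y|^{\widetilde p}},\qquad\kappa\in\overline{B_r(\kappa_0)},
\end{equation*}
with $\widetilde p=\max\{p,q\}\in(0,2)$. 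The right hand side is independent of $\kappa$ and integrable over $y\in\mathbb{R}$ by virtue of the Gaussian factor $e^{-a(t)\sin(2\alpha)y^2}$ dominating the subquadratic exponent.

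Having these two ingredients, holomorphy of the parameter integral is standard: one may either differentiate under the integral directly and check the resulting Cauchy-Riemann equations, or invoke Morera's theorem by computing for any closed triangle $\Delta\subseteq B_r(\kappa_0)$
\begin{equation*}
\oint_{\partial\Delta}\Psi(t,x;\varphi_\kappa)d\kappa=e^{i\alpha}\int_\mathbb{R}G(t,x,ye^{i\alpha})\oint_{\partial\Delta}\varphi_\kappa(ye^{i\alpha})d\kappa\,dy=0,
\end{equation*}
where the interchange of order of integration is justified by Fubini using the above uniform majorant and the inner integral vanishes by Cauchy's theorem applied to the holomorphic map $\kappa\mapsto\varphi_\kappa(ye^{i\alpha})$. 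Since $\kappa_0\in\mathcal{O}$ was arbitrary, this yields holomorphy on all of $\mathcal{O}$.

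The main (and really only) technical obstacle is the $\kappa$-uniformity of the majorant; this is precisely where the continuity of $A(\kappa)$ and $B(\kappa)$ in \eqref{Eq_varphi_estimate} is used. Everything else is a direct transfer of the estimates already developed in the proof of Theorem \ref{satz_Greensfunction}, and no further assumption on $G$ beyond Assumption \ref{ass_Greensfunction} is needed.
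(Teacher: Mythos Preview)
Your proposal is correct and follows essentially the same approach as the paper: represent $\Psi$ via the Fresnel integral \eqref{Eq_Psi_Fresnel}, use the continuity of $A(\kappa),B(\kappa)$ on a compact set to obtain a $\kappa$-uniform integrable majorant, interchange the $\kappa$- and $y$-integrations by Fubini, and conclude via Morera. The only cosmetic difference is that the paper works directly with an arbitrary closed triangle $\Delta\subseteq\mathcal{O}$ (which is itself compact), whereas you first localize to a closed disk $\overline{B_r(\kappa_0)}$ and then take triangles inside it; both variants yield the same uniform bound and the same conclusion.
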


\begin{proof}
Fix $t\in(0,T)$, $x\in\dot{\mathbb{R}}$. Then for any closed triangle $\Delta\subseteq\mathcal{O}$, we have the path integral
\begin{equation}\label{Eq_Psi_kappa2}
\int_\Delta\Psi(t,x;\varphi_\kappa)d\kappa=\int_\Delta e^{i\alpha}\int_\mathbb{R}G(t,x,ye^{i\alpha})\varphi_\kappa(ye^{i\alpha})dyd\kappa,
\end{equation}
due to the representation \eqref{Eq_Psi_Fresnel} of the wave function. Here $\alpha\in(0,\frac{\pi}{2})$ is the angle of the double sector $S_\alpha$ in Assumption \ref{ass_Greensfunction}. In order to interchange the order of integration, we have to prove absolute integrability of the double integral. Firstly, the estimate
\begin{equation}\label{Eq_Psi_kappa1}
\big|G(t,x,ye^{i\alpha})\varphi_\kappa(ye^{i\alpha})\big|\leq A(\kappa)A_0(t,x)e^{-a(t)\sin(2\alpha)y^2+2a(t)\sin(\alpha)|xy|+B_0(t,x)|y|^p+B(\kappa)|y|^q},
\end{equation}
follows from \eqref{Eq_Gtilde_estimate} as well as \eqref{Eq_varphi_estimate}, and shows that the $y$-integral is absolutely convergent. Moreover, the coefficients $A(\kappa),B(\kappa)$ are assumed to be continuous and hence this upper bound can be uniformly estimated on the compact triangle $\Delta$. This means, that the right hand side of \eqref{Eq_Psi_kappa1} can be replaced by some $\kappa$-independent and $y$-integrable upper bound. Hence, the double integral \eqref{Eq_Psi_kappa2} is absolutely convergent and we are allowed to interchange the order of integration
\begin{equation*}
\int_\Delta\Psi(t,x;\varphi_\kappa)d\kappa=e^{i\alpha}\int_\mathbb{R}G(t,x,ye^{i\alpha})\int_\Delta\varphi_\kappa(ye^{i\alpha})d\kappa dy.
\end{equation*}
Since the mapping $\mathcal O\ni\kappa\mapsto\varphi_\kappa(ye^{i\alpha})$ is holomorphic the path integral along $\Delta$ vanishes and we get
\begin{equation*}
\int_\Delta\Psi(t,x;\varphi_\kappa)d\kappa=0.
\end{equation*}
Due to the Theorem of Morera this implies the analyticity of $\mathcal{O}\ni\kappa\mapsto\Psi(t,x;\varphi_\kappa)$.
\end{proof}

\section{Examples of Green's functions}\label{sec_Examples_of_Greens_functions}

This section provides two examples of potentials and transmission conditions which are exemplary for the Assumptions \ref{ass_Greensfunction}. The first one is the centrifugal potential $V(t,x)=\frac{\lambda}{x^2}$ in Section \ref{sec_Centrifugal_potential} and the second one is the free particle on $\dot{\mathbb{R}}$, satisfying arbitrary transmission conditions at $x=0^\pm$ in Section \ref{sec_Arbitrary_point_interactions}. In particular the distributional $\delta$- and $\delta'$-potentials as well as boundary conditions of Dirichlet-, Neumann- and Robin-type are special cases of the second example.

\subsection{Centrifugal potential}\label{sec_Centrifugal_potential}

In this subsection we consider the strongly singular centrifugal potential $V(t,x)=\frac{\lambda}{x^2}$ of strength $\lambda\in\mathbb{R}\setminus\{0\}$. For this potential the case $\lambda>0$ was already investigated in \cite{ACST18,CGS19}, and in \cite{ACSS20} they even consider the combined centrifugal and harmonic oscillator potential $V(t,x)=\frac{\lambda}{x^2}+\omega x^2$, which would also be possible in the following considerations, but in order to avoid technical difficulties we omit this discussion here. The case $\lambda<0$ on the other hand was, to the best of our knowledge, not yet treated with respect to stability of superoscillations and is a novelty of this paper.

\medskip

Note that, realized by the step function $\Theta$, the upcoming Green's functions \eqref{Eq_G_centrifugal_Hankel} and \eqref{Eq_G_centrifugal_Bessel} vanish for $xy<0$, which is due to the fact that the $\frac{1}{x^2}$-potential is too singular at $x=0$ to allow any information exchange between the two halflines. Additionally, this non-integrable singularity automatically implies Dirichlet boundary conditions $\Psi(t,0^+)=\Psi(t,0^-)=0$, see for example \cite{EG06} for justification.

\begin{prop}\label{prop_Example_centrifugal_Hankel}
The Green's function of the attractive centrifugal potential $V(t,x)=\frac{\lambda}{x^2}$, $\lambda<0$, is given by
\begin{equation}\label{Eq_G_centrifugal_Hankel}
G(t,x,y)=\frac{\Theta(xy)\sqrt{xy}}{4i^{\nu+1}t}e^{-\frac{x^2+y^2}{4it}}H_\nu^{(2)}\Big(\frac{xy}{2t}\Big),\qquad t>0,\,x,y\in\dot{\mathbb{R}}.
\end{equation}
Here $H_\nu^{(2)}$ is the Hankel function of the second kind and $\nu\coloneqq\sqrt{\frac{1}{4}+\lambda}$ is real valued for $-\frac{1}{4}\leq\lambda<0$, or purely imaginary for $\lambda<-\frac{1}{4}$. With the transmission matrices $M=I$ the identity matrix and $N=0$ the zero matrix, this Green's function satisfies the Assumption \ref{ass_Greensfunction} with $S_\alpha$ replaced by $D_{\alpha,h}$, for any $\alpha\in(0,\frac{\pi}{2})$, $h>0$, and \eqref{Eq_G_initial} replaced by \eqref{Eq_G_initial_easy}.
\end{prop}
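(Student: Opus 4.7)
The plan is to verify, in order, the holomorphic extension to $D_{\alpha,h}$, the Schrödinger equation and transmission condition, the simplified initial condition \eqref{Eq_G_initial_easy}, and the exponential bounds \eqref{Eq_Gtilde_estimates}. Throughout I choose $a(t)\coloneqq\frac{1}{4t}$, so that the decomposition \eqref{Eq_G_decomposition} reads
\begin{equation*}
\widetilde{G}(t,x,z)=\frac{\Theta(xz)\sqrt{xz}}{4i^{\nu+1}t}e^{ixz/(2t)}H_\nu^{(2)}\Big(\frac{xz}{2t}\Big).
\end{equation*}
For the continuation, fix $x>0$: the factor $\Theta(xz)$ forces $G=0$ on the component of $D_{\alpha,h}$ around the negative real axis, so the zero function provides the required continuous and holomorphic continuation there. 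On the other component $z$ satisfies $\Re(z)>0$ (since $\Arg z\in[-\alpha,\alpha]$ with $\alpha<\frac{\pi}{2}$), hence $xz/(2t)$ lies in the right half-plane, uniformly away from the branch cut $(-\infty,0]$ of $H_\nu^{(2)}$, and the explicit formula yields a holomorphic continuation. The case $x<0$ is symmetric.

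For the PDE and transmission condition (item (i)), write $w=xz/(2t)$ and use Bessel's equation $w^2(H_\nu^{(2)})''(w)+w(H_\nu^{(2)})'(w)+(w^2-\nu^2)H_\nu^{(2)}(w)=0$ together with $\nu^2=\frac{1}{4}+\lambda$; a routine differentiation then shows $(i\partial_t+\partial_x^2-\lambda/x^2)G=0$. The transmission condition with $M=I$, $N=0$ amounts to $\lim_{x\to 0^\pm}G(t,x,y)=0$ for every $y\in\dot{\mathbb{R}}$; this follows from the near-origin behavior $|H_\nu^{(2)}(w)|=O(|w|^{-\Re(\nu)})$ for $\Re(\nu)>0$, $O(|\log|w||)$ for $\nu=0$, and $O(1)$ for $\nu$ purely imaginary. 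In all three regimes, multiplication by $\sqrt{xy}$ (together with $\Re(\nu)<\frac{1}{2}$) makes the product vanish as $x\to 0^\pm$.

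For the initial condition, as $t\to 0^+$ the argument $x^2/(2t)$ tends to $+\infty$, so the uniform large-argument asymptotic $H_\nu^{(2)}(w)\sim\sqrt{2/(\pi w)}\,e^{-i(w-\nu\pi/2-\pi/4)}$ gives $e^{iw}H_\nu^{(2)}(w)\sim\sqrt{2/(\pi w)}\,e^{i(\nu\pi/2+\pi/4)}$. Inserting into $G(t,x,x)/\sqrt{a(t)}=2\sqrt{t}\,G(t,x,x)$ and simplifying, the $|x|$ and $\sqrt{t}$ factors cancel, leaving the limit $e^{-i\pi/4}/\sqrt{\pi}=1/\sqrt{i\pi}$, which is exactly \eqref{Eq_G_initial_easy}.

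The main obstacle is the exponential bound \eqref{Eq_Gtilde_estimates}. The crucial observation is that in the relevant sector the product $e^{iw}H_\nu^{(2)}(w)$ remains bounded (it even decays like $|w|^{-1/2}$ as $|w|\to\infty$), thanks to the cancellation with the oscillating factor $e^{-iw}$ from the asymptotic; near $w=0$ the mild singularity of $H_\nu^{(2)}$ is absorbed by $\sqrt{xz}$ exactly as in the transmission step. Hence $|\widetilde{G}(t,x,z)|$ is dominated by a $(t,x)$-dependent polynomial in $|z|$, and any $p\in(0,2)$ suffices in \eqref{Eq_Gtilde_estimate}. The derivatives $\partial_x\widetilde{G}$, $\partial_x^2\widetilde{G}$, $\partial_t\widetilde{G}$ admit analogous bounds via the Hankel recursion $\frac{d}{dw}H_\nu^{(2)}(w)=H_{\nu-1}^{(2)}(w)-(\nu/w)H_\nu^{(2)}(w)$. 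Tracking the prefactors yields $A_0(t,x)\sim 1/\sqrt{t}$ with $B_0(t,x)$ bounded as $t\to 0^+$; combined with $\sqrt{a(t)}=1/(2\sqrt{t})$ this gives \eqref{Eq_Coefficient_extension}, while the $x\to 0^\pm$ boundedness of $A_0$, $B_0$ required for the $M\neq 0$, $N=0$ case follows from the same small-$w$ analysis.
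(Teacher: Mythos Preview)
Your argument is correct and follows essentially the same route as the paper: the same choice $a(t)=\frac{1}{4t}$ and $\widetilde G$, the Bessel differential equation for \eqref{Eq_G_Schroedinger}, the small-argument behaviour $\sqrt{w}\,H_\nu^{(2)}(w)\to 0$ (using $\Re\nu<\tfrac12$) for the Dirichlet transmission condition, the large-argument asymptotic of $H_\nu^{(2)}$ for \eqref{Eq_G_initial_easy}, and the key cancellation that $e^{iw}H_\nu^{(2)}(w)$ is $O(|w|^{-1/2})$ for the exponential bounds. The only cosmetic difference is that the paper packages the last step as the global inequality $|H_\nu^{(2)}(w)|\le C_\nu|w|^{-1/2}e^{\Im(w)}$ on $\Re(w)>0$, which immediately yields the $z$-independent bound $|\widetilde G(t,x,z)|\le C_\nu/(2\sqrt{2t})$ (so $B_0\equiv 0$), whereas you settle for a polynomial-in-$|z|$ bound; your version is weaker but still sufficient for \eqref{Eq_Gtilde_estimate}, and in fact your own observations already give the constant bound if you combine them.
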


\begin{prop}\label{prop_Example_centrifugal_Bessel}
The Green's function of the repulsive centrifugal potential $V(t,x)=\frac{\lambda}{x^2}$, $\lambda>0$ is given by
\begin{equation}\label{Eq_G_centrifugal_Bessel}
G(t,x,y)=\frac{\Theta(xy)\sqrt{xy}}{2i^{\nu+1}t}e^{-\frac{x^2+y^2}{4it}}J_\nu\Big(\frac{xy}{2t}\Big),\qquad t>0,\,x,y\in\dot{\mathbb{R}},
\end{equation}
where $J_\nu$ denotes the Bessel function of the first kind and $\nu\coloneqq\sqrt{\frac{1}{4}+\lambda}$. With the transmission matrices $M=I$ the identity matrix and $N=0$ the zero matrix, this Green's function satisfies the Assumption \ref{ass_Greensfunction}.
\end{prop}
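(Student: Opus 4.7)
The plan is to verify the three parts of Assumption \ref{ass_Greensfunction} for the given $G$ with boundary matrices $M=I$, $N=0$. I would begin by extending $G(t,x,\cdot)$ from $\dot{\mathbb{R}}$ to the double sector $S_\alpha$: on the sub-sector containing the real half-line of the same sign as $x$, define $G(t,x,z)$ by the displayed formula using the principal branch of $(xz)^{1/2}$, while on the opposite sub-sector keep $G(t,x,z)=0$ to match the $\Theta(xy)$ factor. Since $J_\nu(w)=(w/2)^\nu h(w^2)$ for an entire $h$, the combination $\sqrt{xz}\,J_\nu(xz/(2t))$ is holomorphic on the interior of each sub-sector and continuous up to its boundary, so the required continuation to $S_\alpha$ exists.

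For Assumption~\ref{ass_Greensfunction}(i), the Schrödinger equation \eqref{Eq_G_Schroedinger} reduces to an algebraic computation using Bessel's ODE $w^2 J_\nu''(w)+wJ_\nu'(w)+(w^2-\nu^2)J_\nu(w)=0$ together with the identity $\nu^2=\tfrac14+\lambda$; differentiation of the $\sqrt{x}$ factor produces a $-1/(4x^2)$ correction which combines with the $-\nu^2/(xy/(2t))^2$ term from Bessel's equation to yield exactly the potential $\lambda/x^2$. The transmission condition $\Psi(t,0^\pm)=0$ is automatic because $\sqrt{xz}\,J_\nu(xz/(2t))$ vanishes like $(xz)^{\nu+1/2}$ as $x\to 0^\pm$ since $\nu>1/2$.

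For Assumption~\ref{ass_Greensfunction}(iii), I would set $a(t)=1/(4t)$ and use the identity $-(x^2+z^2)/(4it)-i(z-x)^2/(4t)=ixz/(2t)$ to obtain
\[
\widetilde G(t,x,z)=\frac{\Theta(xz)\sqrt{xz}}{2i^{\nu+1}t}\,e^{ixz/(2t)}J_\nu\!\Bigl(\frac{xz}{2t}\Bigr).
\]
The central ingredient is the classical bound $|e^{iw}J_\nu(w)|\leq C(1+|w|)^{-1/2}$ on the closed upper half-plane $\Im(w)\geq 0$, which follows from the decomposition $J_\nu=(H_\nu^{(1)}+H_\nu^{(2)})/2$: the exponentially growing piece $H_\nu^{(2)}(w)\sim\sqrt{2/(\pi w)}\,e^{-i(w-\nu\pi/2-\pi/4)}$ is cancelled exactly by $e^{iw}$, while $H_\nu^{(1)}$ contributes the fast decay. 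This renders $\widetilde G(t,x,z)$ uniformly bounded in $z\in S_\alpha$ for fixed $(t,x)$, so \eqref{Eq_Gtilde_estimate} holds with any $p\in(0,2)$ and $B_0\equiv 0$, and $A_0(\,\cdot\,,x)/\sqrt{a(t)}=O(1)$ as $t\to 0^+$ by the $1/t$ prefactor. The analogous bounds for $\partial_x\widetilde G$, $\partial_x^2\widetilde G$, $\partial_t\widetilde G$ follow the same scheme via the recurrence $J_\nu'(w)=\tfrac{\nu}{w}J_\nu(w)-J_{\nu+1}(w)$ and Bessel's ODE, with the continuity and boundedness as $x\to 0^\pm$ required by $M\neq 0$, $N=0$ being inherited from the explicit form of the prefactors.

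The main obstacle will be Assumption~\ref{ass_Greensfunction}(ii), verifying
\[
\lim_{t\to 0^+}\int_{-x_0}^{x_0}G(t,x,y)\varphi(y)\,dy=\varphi(x),\qquad\varphi\in\mathcal{C}^\infty([-x_0,x_0]).
\]
Since $G$ vanishes for $xy<0$, assume $x>0$ and integrate over $(0,x_0]$. The strategy is the inverse Hankel transform: the kernel $y\mapsto\sqrt{xy}\,e^{-(x^2+y^2)/(4it)}J_\nu(xy/(2t))/(2i^{\nu+1}t)$ forms an approximate identity concentrated at $y=x$ as $t\to 0^+$. Substituting the oscillatory asymptotic of $J_\nu$, the $H_\nu^{(2)}$ piece produces the phase $(y-x)^2/(4t)$ after combining with the prefactors, giving a stationary point at $y=x$ that contributes $\varphi(x)$, while the $H_\nu^{(1)}$ piece oscillates too rapidly and is killed by integration by parts against $\sqrt{y}\varphi(y)/(\text{polynomial})$. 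This delta-convergence is the classical result used for the repulsive centrifugal Green's function in \cite{ACST18,CGS19,ACSS20}, and I would invoke it rather than reproduce the full computation; pinning down the precise error estimates away from $y=x$ uniformly in $t$ is where most of the work will lie.
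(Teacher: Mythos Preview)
Your outline for parts (i) and (iii) matches the paper's proof essentially line for line: the same extension to the sector, the same $a(t)=\tfrac{1}{4t}$ and the same $\widetilde G$, the verification of \eqref{Eq_G_Schroedinger} via Bessel's ODE, and the uniform bound $|\widetilde G(t,x,z)|\le C/\sqrt{t}$ coming from $|J_\nu(w)|\le D_\nu|w|^{-1/2}e^{|\Im w|}$ (your upper-half-plane bound on $e^{iw}J_\nu(w)$ is the same estimate in disguise, since $w=xz/(2t)$ always lands in the first quadrant).

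The gap is part (ii). You defer to \cite{ACST18,CGS19,ACSS20}, but those papers do not verify the specific distributional limit \eqref{Eq_G_initial} in the form required by Assumption~\ref{ass_Greensfunction}; the present paper carries out the computation directly, and it is not simply a citation. The paper's argument is concrete and worth knowing: one replaces $J_\nu(w)$ by its leading asymptotic $\sqrt{2/(\pi w)}\cos\!\big(w-\tfrac{(2\nu+1)\pi}{4}\big)$ and controls the remainder via a bound $|J_\nu(w)-\sqrt{2/(\pi w)}\cos(\cdots)|\le C_\nu|w|^{-5/4}e^{|\Im w|}$, the exponent $\tfrac54$ being chosen so that the resulting error $\int_0^{x_0}(xy)^{-3/4}|\varphi(y)|\,dy$ is finite while the prefactor $t^{1/4}$ kills the difference. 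After this reduction the cosine splits into $e^{\pm i xy/(2t)}$, producing the phases $e^{-(x\mp y)^2/(4it)}$; each is integrated by parts against $\varphi$ using $\tfrac{d}{dy}\erf\!\big(\tfrac{x\mp y}{2\sqrt{it}}\big)$, and the limit $t\to 0^+$ is then read off from $\erf(\pm\infty/\sqrt{i})=\pm 1$. This is exactly your ``$H_\nu^{(2)}$ gives a stationary point at $y=x$, $H_\nu^{(1)}$ is killed by integration by parts'' heuristic, but executed with explicit error control. Your proposal would become a complete proof once you supply the asymptotic-replacement estimate and the $\erf$ integration; the stationary-phase language alone does not pin down the constant $\varphi(x)$ without this bookkeeping.
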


\begin{proof}[Proof of Proposition \ref{prop_Example_centrifugal_Hankel}]
First of all, for every $t>0$, $x\in\dot{\mathbb{R}}$ the function $G(t,x,\,\cdot\,)$ holomorphically extends to $\Set{z\in\mathbb{C} | \Re(z)\neq 0}$ by
\begin{equation*}
G(t,x,z)=\frac{\Theta(\pm x)\sqrt{xz}}{4i^{\nu+1}t}e^{-\frac{x^2+z^2}{4it}}H_\nu^{(2)}\Big(\frac{xz}{2t}\Big),\qquad\pm\Re(z)>0.
\end{equation*}
In particular, $G(t,x,\,\cdot\,)$ is holomorphic on the domain $D_{\alpha,h}$ in \eqref{Eq_Dalphah} for any $\alpha\in(0,\frac{\pi}{2})$, $h>0$. Moreover, the decomposition \eqref{Eq_G_decomposition} is satisfied using $a(t)=\frac{1}{4t}$ and
\begin{equation*}
\widetilde{G}(t,x,z)=\frac{\Theta(\pm x)\sqrt{xz}}{4i^{\nu+1}t}e^{-\frac{xz}{2it}}H_\nu^{(2)}\Big(\frac{xz}{2t}\Big),\qquad\pm\Re(z)>0.
\end{equation*}
It is obvious, that for fixed $z\in\mathbb{C}$ with $\Re(z)\neq 0$ we have $G(\,\cdot\,,\,\cdot\,,z)\in\AC_{1,2}((0,T)\times\dot{\mathbb{R}})$. In order to see, that it is a solution of \eqref{Eq_G_Schroedinger}, we explicitly calculate the derivatives
\begin{align}
\frac{\partial}{\partial x}\widetilde{G}(t,x,z)&=\Big(\frac{1}{2x}-\frac{z}{2it}\Big)\widetilde{G}(t,x,z)+\frac{\Theta(\pm x)z\sqrt{xz}}{8i^{\nu+1}t^2}e^{-\frac{xz}{2it}}H_\nu^{(2)'}\Big(\frac{xz}{2t}\Big), \notag \\
\frac{\partial^2}{\partial x^2}\widetilde{G}(t,x,z)&=\Big(\frac{\nu^2-\frac{1}{4}}{x^2}-\frac{z}{2itx}-\frac{z^2}{2t^2}\Big)\widetilde{G}(t,x,z)+\frac{\Theta(\pm x)z^2\sqrt{xz}}{8i^\nu t^3}e^{-\frac{xz}{2it}}H_\nu^{(2)'}\Big(\frac{xz}{2t}\Big), \label{Eq_Gtilde_centrifugal_derivatives_Hankel} \\
\frac{\partial}{\partial t}\widetilde{G}(t,x,z)&=\Big(\frac{xz}{2it^2}-\frac{1}{t}\Big)\widetilde{G}(t,x,z)-\frac{\Theta(\pm x)(xz)^{\frac{3}{2}}}{8i^{\nu+1}t^3}e^{-\frac{xz}{2it}}H_\nu^{(2)'}\Big(\frac{xz}{2t}\Big), \notag
\end{align}
where for the second spatial derivative we used the Bessel differential equation
\begin{equation*}
w^2H_\nu^{(2)''}(w)+wH_\nu^{(2)'}(w)+(w^2-\nu^2)H_\nu^{(2)}(w)=0,\qquad\Re(w)>0.
\end{equation*}
Using $\lambda=\nu^2-\frac{1}{4}$, these derivatives now satisfy 
\begin{equation*}
i\frac{\partial}{\partial t}\widetilde{G}(t,x,z)=\Big(-\frac{\partial^2}{\partial x^2}+\frac{x-z}{it}\frac{\partial}{\partial x}+\frac{1}{2it}+\frac{\lambda}{x^2}\Big)\widetilde{G}(t,x,z),
\end{equation*}
which is equivalent to \eqref{Eq_G_Schroedinger} using the decomposition \eqref{Eq_G_decomposition}. From \cite[Eq.(9.1.4),(9.1.7),(9.1.8)]{AS72} and the fact that $0\leq\Re(\nu)<\frac{1}{2}$ we get the convergence
\begin{equation}\label{Eq_Hankel_zero_asymptotics}
\sqrt{w}\,H_\nu^{(2)}(w)\rightarrow 0,\qquad\text{as }w\rightarrow 0,\,\Re(w)>0.
\end{equation}
Hence we conclude for every $t>0$, $\pm y>0$ the boundary values \eqref{Eq_G_transmission} by
\begin{equation*}
G(t,x,y)=\frac{\Theta(\pm x)\sqrt{xy}}{4i^{\nu+1}t}e^{-\frac{x^2+y^2}{4it}}H_\nu^{(2)}\Big(\frac{xy}{2t}\Big)\overset{x\rightarrow 0}{\longrightarrow}0.
\end{equation*}
Moreover, with the asymptotics
\begin{equation}\label{Eq_Hankel_inf_asymptotics}
H_\nu^{(2)}(w)=e^{-iw}\bigg(\frac{\sqrt{2}\,i^{\nu+\frac{1}{2}}}{\sqrt{\pi w}}+\mathcal{O}\Big(\frac{1}{w^{\frac{3}{2}}}\Big)\bigg),\qquad\text{as }w\rightarrow\infty,\,\Re(w)>0,
\end{equation}
from \cite[Eq.(9.2.8)]{AS72}, it follows that for every $x\in\dot{\mathbb{R}}$ the initial condition \eqref{Eq_G_initial_easy} is satisfied by
\begin{equation*}
\frac{G(t,x,x)}{\sqrt{a(t)}}=\frac{|x|}{2i^{\nu+1}\sqrt{t}}e^{-\frac{x^2+z^2}{4it}}H_\nu^{(2)}\Big(\frac{x^2}{2t}\Big)\overset{t\rightarrow 0^+}{\longrightarrow}\frac{1}{\sqrt{i\pi}}.
\end{equation*}
Combining the asymptotics \eqref{Eq_Hankel_zero_asymptotics} and \eqref{Eq_Hankel_inf_asymptotics}, there also exists some constant $C_\nu\geq 0$ such that
\begin{equation}\label{Eq_Hankel_estimate}
|H_\nu^{(2)}(w)|\leq\frac{C_\nu}{\sqrt{|w|}}e^{\Im(w)},\qquad\Re(w)>0.
\end{equation}
Hence we can estimate
\begin{equation}\label{Eq_Gtilde_estimate_Hankel}
|\widetilde{G}(t,x,z)|=\frac{\Theta(\pm x)\sqrt{|xz|}}{4t}e^{-\frac{x\Im(z)}{2t}}\Big|H_\nu^{(2)}\Big(\frac{xz}{2t}\Big)\Big|\leq\frac{C_\nu}{2\sqrt{2t}},\qquad\pm\Re(z)>0,
\end{equation}
which is \eqref{Eq_Gtilde_estimate} with the coefficients $A_0(t,x)=\frac{C_\nu}{2\sqrt{2t}}$ and $B_0(t,x)=0$. These coefficients are bounded as $x\rightarrow 0^\pm$ and $\frac{A_0}{\sqrt{a}}$ and $B_0$ are also bounded as $t\rightarrow 0^+$, as requested in \eqref{Eq_Coefficient_extension}. Moreover, by \cite[Eq.(9.2.14)]{AS72} the derivative of the Hankel function admits the asymptotics
\begin{equation*}
e^{iw}H_\nu^{(2)'}(w)=\mathcal{O}\Big(\frac{1}{\sqrt{w}}\Big),\qquad\text{as }w\rightarrow\infty,\,\Re(w)>0,
\end{equation*}
and it follows again from \cite[Eq.(9.1.4),(9.1.7),(9.1.9),(9.1.27)]{AS72}, that
\begin{equation*}
H_\nu^{(2)'}(w)=\mathcal{O}\Big(\frac{1}{w^{\frac{3}{2}}}\Big),\qquad\text{as }w\rightarrow 0,\,\Re(w)>0.
\end{equation*}
These two asymptotics now ensure the existence of a constnat $D_\nu\geq 0$ with
\begin{equation}\label{Eq_dHankel_estimate}
|H_\nu^{(2)'}(w)|\leq D_\nu\Big(\frac{1}{\sqrt{|w|}}+\frac{1}{|w|^{\frac{3}{2}}}\Big)e^{\Im(w)},\qquad\Re(w)>0.
\end{equation}
Due to the explicit form \eqref{Eq_Gtilde_centrifugal_derivatives_Hankel} of the derivatives of $\widetilde{G}$ and the estimates \eqref{Eq_Gtilde_estimate_Hankel} and \eqref{Eq_dHankel_estimate}, one immediately sees, that also the exponential bounds \eqref{Eq_Gtildex_estimate} and \eqref{Eq_Gtilde_derivative_estimate} are satisfied.
\end{proof}

\begin{proof}[Proof of Proposition \ref{prop_Example_centrifugal_Bessel}]
First of all, for every $t>0$, $x\in\dot{\mathbb{R}}$ the function $G(t,x,\,\cdot\,)$ holomorphically extends to $\Set{z\in\mathbb{C} | \Re(z)\neq 0}$ by
\begin{equation*}
G(t,x,z)=\frac{\Theta(\pm x)\sqrt{xz}}{2i^{\nu+1}t}e^{-\frac{x^2+z^2}{4it}}J_\nu\Big(\frac{xz}{2t}\Big),\qquad\pm\Re(z)>0.
\end{equation*}
In particular, $G(t,x,\,\cdot\,)$ is holomorphic on the double sector $S_\alpha$, for any $\alpha\in(0,\frac{\pi}{2})$. Moreover, the decomposition \eqref{Eq_G_decomposition} is satisfied using $a(t)=\frac{1}{4t}$ and
\begin{equation*}
\widetilde{G}(t,x,z)=\frac{\Theta(\pm x)\sqrt{xz}}{2i^{\nu+1}t}e^{-\frac{xz}{2it}}J_\nu\Big(\frac{xz}{2t}\Big),\qquad\pm\Re(z)>0.
\end{equation*}
Now we verify the properties (i) -- (iii) of Assumption \ref{ass_Greensfunction}.

\begin{enumerate}
\item[(i)] It is obvious, that for fixed $z\in\mathbb{C}$ with $\Re(z)\neq 0$, we have $G(\,\cdot\,,\,\cdot\,,z)\in\AC_{1,2}((0,T)\times\dot{\mathbb{R}})$. In order to see, that it is a solution of \eqref{Eq_G_Schroedinger}, we explicitly calculate the derivatives
\begin{align}
\frac{\partial}{\partial x}\widetilde{G}(t,x,z)&=\Big(\frac{1}{2x}-\frac{z}{2it}\Big)\widetilde{G}(t,x,z)+\frac{\Theta(\pm x)z\sqrt{xz}}{4i^{\nu+1}t^2}e^{-\frac{xz}{2it}}J_\nu'\Big(\frac{xz}{2t}\Big), \notag \\
\frac{\partial^2}{\partial x^2}\widetilde{G}(t,x,z)&=\Big(\frac{\nu^2-\frac{1}{4}}{x^2}-\frac{z}{2itx}-\frac{z^2}{2t^2}\Big)\widetilde{G}(t,x,z)+\frac{\Theta(\pm x)z^2\sqrt{xz}}{4i^\nu t^3}e^{-\frac{xz}{2it}}J_\nu'\Big(\frac{xz}{2t}\Big), \label{Eq_Gtilde_centrifugal_derivatives_Bessel} \\
\frac{\partial}{\partial t}\widetilde{G}(t,x,z)&=\Big(\frac{xz}{2it^2}-\frac{1}{t}\Big)\widetilde{G}(t,x,z)-\frac{\Theta(\pm x)(xz)^{\frac{3}{2}}}{4i^{\nu+1}t^3}e^{-\frac{xz}{2it}}J_\nu'\Big(\frac{xz}{2t}\Big), \notag
\end{align}
where for the second spatial derivative we used the Bessel differential equation
\begin{equation*}
w^2J_\nu''(w)+wJ_\nu'(w)+(w^2-\nu^2)J_\nu(w)=0,\qquad\Re(w)>0.
\end{equation*}
Using $\lambda=\nu^2-\frac{1}{4}$, these derivatives now satisfy 
\begin{equation*}
i\frac{\partial}{\partial t}\widetilde{G}(t,x,z)=\Big(-\frac{\partial^2}{\partial x^2}+\frac{x-z}{it}\frac{\partial}{\partial x}+\frac{1}{2it}+\frac{\lambda}{x^2}\Big)\widetilde{G}(t,x,z),
\end{equation*}
which is equivalent to \eqref{Eq_G_Schroedinger} using the decomposition \ref{Eq_G_decomposition}. From \cite[Eq.(9.1.7)]{AS72} and the fact that $\nu>\frac{1}{2}$ we get the limit
\begin{equation*}
J_\nu(w)\rightarrow 0,\qquad\text{as }w\rightarrow 0,\,\Re(w)>0.
\end{equation*}
Hence we conclude for every $t>0$, $\pm y>0$ the boundary value \eqref{Eq_G_transmission} by
\begin{equation*}
G(t,x,y)=\frac{\Theta(\pm x)\sqrt{xy}}{2i^{\nu+1}t}e^{-\frac{x^2+y^2}{4it}}J_\nu\Big(\frac{xy}{2t}\Big)\overset{x\rightarrow 0}{\longrightarrow}0.
\end{equation*}

\item[(ii)] In order to check the initial condition \eqref{Eq_G_initial}, we fix $x\in\dot{\mathbb{R}}$ and without loss of generality we only consider $x>0$. Let now $x_0>x$ be arbitrary, $\varphi\in\mathcal{C}^\infty([-x_0,x_0])$ and consider the function
\begin{equation*}
\Psi_0(t,x)\coloneqq\int_{-x_0}^{x_0}G(t,x,y)\varphi(y)dy=\frac{1}{2i^{\nu+1}t}\int_0^{x_0}\sqrt{xy}\,e^{-\frac{x^2+y^2}{4it}}J_\nu\Big(\frac{xy}{2t}\Big)\varphi(y)dy,
\end{equation*}
as well as the approximated function
\begin{equation*}
\widetilde{\Psi}_0(t,x)\coloneqq\frac{1}{i^{\nu+1}\sqrt{\pi t}}\int_0^{x_0}e^{-\frac{x^2+y^2}{4it}}\cos\Big(\frac{xy}{2t}-\frac{(2\nu+1)\pi}{4}\Big)\varphi(y)dy.
\end{equation*}
From \cite[Eq.(9.2.1)]{AS72} we get the asymptotics
\begin{equation}\label{Eq_Bessel_asymptotic_inf}
J_\nu(w)-\frac{\sqrt{2}}{\sqrt{\pi w}}\cos\Big(w-\frac{(2\nu+1)\pi}{4}\Big)=e^{|\Im(w)|}\mathcal{O}\Big(\frac{1}{w^{\frac{3}{2}}}\Big),\qquad\text{as }w\rightarrow\infty,\,\Re(w)>0,
\end{equation}
Since moreover, $J_\nu(w)\rightarrow 0$ as $w\rightarrow 0$ by \cite[Eq.(9.1.7)]{AS72}, there exists some $C_\nu\geq 0$ with
\begin{equation*}
\Big|J_\nu(w)-\frac{\sqrt{2}}{\sqrt{\pi w}}\cos\Big(w-\frac{(2\nu+1)\pi}{4}\Big)\Big|\leq\frac{C_\nu}{|w|^{\frac{5}{4}}}e^{|\Im(w)|},\qquad\Re(w)>0,
\end{equation*}
where the exponent $\frac{5}{4}$ is chosen such that in the following estimate the term $t^{\frac{1}{4}}(xy)^{-\frac{3}{4}}$ appears, which on the one hand vanishes in the limit $t\rightarrow 0^+$, but is still integrable at $y=0^+$. With this inequality we can now estimate the error of the approximate function by
\begin{align*}
|\Psi_0(t,x)-\widetilde{\Psi}_0(t,x)|&\leq\frac{1}{2t}\int_0^{x_0}\sqrt{xy}\,\Big|J_\nu\Big(\frac{xy}{2t}\Big)-\frac{2\sqrt{t}}{\sqrt{\pi xy}}\cos\Big(\frac{xy}{2t}-\frac{(2\nu+1)\pi}{4}\Big)\Big||\varphi(y)|dy \\
&\leq C_\nu(2t)^{\frac{1}{4}}\int_0^{x_0}\frac{1}{(xy)^{\frac{3}{4}}}|\varphi(y)|dy \\
&\leq\frac{4C_\nu(2tx_0)^{\frac{1}{4}}}{x^{\frac{3}{4}}}\Vert\varphi\Vert_\infty.
\end{align*}
Since the right hand side converges to zero as $t\rightarrow 0^+$, we get
\begin{equation}\label{Eq_Psi_tildePsi_limit}
\lim\limits_{t\rightarrow 0^+}\Psi_0(t,x)=\lim\limits_{t\rightarrow 0^+}\widetilde{\Psi}_0(t,x),
\end{equation}
and we reduced the problem \eqref{Eq_G_initial} to the initial value of $\widetilde{\Psi}_0(t,x)$. Writing the cosine as an exponential function we can split up the integral as
\begin{align*}
\widetilde{\Psi}_0(t,x)&=\frac{1}{2i^{\nu+1}\sqrt{\pi t}}\int_0^{x_0}e^{-\frac{x^2+y^2}{4it}}\Big(e^{i\frac{xy}{2t}-i\frac{(2\nu+1)\pi}{4}}+e^{-i\frac{xy}{2t}+i\frac{(2\nu+1)\pi}{4}}\Big)\varphi(y)dy \\
&=\frac{1}{2\sqrt{i\pi t}}\int_0^{x_0}\Big((-1)^{\nu+\frac{1}{2}}e^{-\frac{(x+y)^2}{4it}}+e^{-\frac{(x-y)^2}{4it}}\Big)\varphi(y)dy.
\end{align*}
Using the derivative $\frac{d}{dw}\erf(w)=\frac{2}{\sqrt{\pi}}e^{-w^2}$ of the error function and applying integration by parts, one can rewrite this integral as
\begin{align*}
\widetilde{\Psi}_0(t,x)&=\frac{1}{2}\int_0^{x_0}\frac{d}{dy}\Big((-1)^{\nu+\frac{1}{2}}\erf\Big(\frac{x+y}{2\sqrt{it}}\Big)-\erf\Big(\frac{x-y}{2\sqrt{it}}\Big)\Big)\varphi(y)dy \\
&=\frac{(-1)^{\nu+1}}{2}\bigg(\erf\Big(\frac{x+x_0}{2\sqrt{it}}\Big)\varphi(x_0)-\erf\Big(\frac{x}{2\sqrt{it}}\Big)\varphi(0)-\int_0^{x_0}\erf\Big(\frac{x+y}{2\sqrt{it}}\Big)\varphi'(y)dy\bigg) \\
&\quad-\frac{1}{2}\bigg(\erf\Big(\frac{x-x_0}{2\sqrt{it}}\Big)\varphi(x_0)-\erf\Big(\frac{x}{2\sqrt{it}}\Big)\varphi(0)-\int_0^{x_0}\erf\Big(\frac{x-y}{2\sqrt{it}}\Big)\varphi'(y)dy\bigg).
\end{align*}
Apply now the limit $t\rightarrow 0^+$ and carrying it inside the integral is allowed since the integrand is uniformly bounded. Using also $0<x<x_0$ as well as the limit $\lim_{w\rightarrow\pm\infty}\erf(\frac{w}{\sqrt{i}})=\pm 1$ of the error function gives the initial value
\begin{align*}
\lim\limits_{t\rightarrow 0^+}\widetilde{\Psi}_0(t,x)&=\frac{(-1)^{\nu+1}}{2}\bigg(\varphi(x_0)-\varphi(0)-\int_0^{x_0}\varphi'(y)dy\bigg) \\
&\quad-\frac{1}{2}\bigg(-\varphi(x_0)-\varphi(0)-\int_0^{x_0}\sgn(x-y)\varphi'(y)dy\bigg)=\varphi(x).
\end{align*}
Together with \eqref{Eq_Psi_tildePsi_limit}, this proves the initial value \eqref{Eq_G_initial}.

\item[(iii)] By the asymptotics \eqref{Eq_Bessel_asymptotic_inf} and since $J_\nu(w)\rightarrow 0$ as $w\rightarrow 0$ by \cite[Eq.(9.1.7)]{AS72}, there exists $D_\nu\geq 0$ with
\begin{equation*}
|J_\nu(w)|\leq\frac{D_\nu}{\sqrt{|w|}}e^{|\Im(w)|},\qquad\Re(w)>0.
\end{equation*}
With this estimate, the absolute value of $\widetilde{G}$ can be estimated by
\begin{equation*}
|\widetilde{G}(t,x,z)|\leq\frac{\Theta(\pm x)\sqrt{|xz|}}{2t}e^{-\frac{x\Im(z)}{2t}}\Big|J_\nu\Big(\frac{xz}{2t}\Big)\Big|\leq\frac{D_\nu}{\sqrt{2t}},\qquad\pm z\in S_\alpha^+,
\end{equation*}
where we used that $|x\Im(z)|=x\Im(z)$ since $\pm z\in S_\alpha^+$ and $\pm x>0$. Hence $\widetilde{G}$ satisfies the bound \eqref{Eq_Gtilde_estimate} with the coefficients $A_0(t,x)=\frac{D_\nu}{\sqrt{2t}}$ and $B_0(t,x)=0$. These coefficients are bound as $x\rightarrow 0^\pm$ and also $\frac{A_0}{\sqrt{a}}$ and $B_0$ are bounded as $t\rightarrow 0^+$, as requested in \eqref{Eq_Coefficient_extension}. Moreover, it follows from \cite[Eq.(9.1.10),(9.1.27),(9.2.11)]{AS72} as well as $\nu>\frac{1}{2}$, that there exists some $E_\nu\geq 0$ such that the derivative of the Bessel function is bounded by
\begin{equation}\label{Eq_dBessel_estimate}
|J_\nu'(w)|\leq\frac{E_\nu}{\sqrt{|w|}}e^{|\Im(w)|},\qquad\Re(w)>0.
\end{equation}
Due to the explicit form \eqref{Eq_Gtilde_centrifugal_derivatives_Bessel} of the derivatives of $\widetilde{G}$ and the additional estimate \eqref{Eq_dBessel_estimate} of the derivative of the Bessel function, one immediately sees, that also the exponential bounds \eqref{Eq_Gtildex_estimate} and \eqref{Eq_Gtilde_derivative_estimate} are satisfied. \qedhere
\end{enumerate}
\end{proof}

\subsection{Arbitrary point interactions}\label{sec_Arbitrary_point_interactions}

In this subsection we consider the vanishing classical potential $V(t,x)=0$, $x\in\dot{\mathbb{R}}$, and allow all possible self-adjoint singular interactions at the origin $x=0^\pm$. In particular, the Dirac $\delta$- and $\delta'$-potential and boundary conditions of Dirichlet-, Neumann- and Robin-type are included, see \cite[Section 3]{ABCS21_1}. Although the time persistence problem with respect to those point interactions was already considered in \cite{ABCS21_1}, this section shows that these distributional potentials are also covered by the general theory of Section \ref{sec_Schroedinger_equation_on_R0}.

\medskip

There are various ways to describe the complete family of self-adjoint interface conditions, but for our purposes it is convenient to use the parametrization
\begin{equation}\label{Eq_Interface_condition}
(I-J)\vvect{\Psi(t,0^+)}{\Psi(t,0^-)}=i(I+J)\vvect{\Psi_x(t,0^+)}{-\Psi_x(t,0^-)},
\end{equation}
where $I$ is the $2\times 2$ identity matrix and $J$ is some arbitrary $2\times 2$ unitary matrix, which in turn can be represented by
\begin{equation}\label{Eq_J}
J=e^{i\phi}\mmatrix{\alpha}{-\overline{\beta}}{\beta}{\overline{\alpha}},
\end{equation}
with parameters $\phi\in[0,\pi)$ and $\alpha,\beta\in\mathbb{C}$ satisfying $|\alpha|^2+|\beta|^2=1$. In order to write down the corresponding Green's function, we start by defining the entire function
\begin{equation}\label{Eq_Lambda}
\Lambda(z)\coloneqq e^{z^2}(1-\erf(z)),\qquad z\in\mathbb{C},
\end{equation}
which is a modification of the well known error function $\erf(z)=\frac{2}{\sqrt{\pi}}\int_0^ze^{-w^2}dw$. The Green's function is now given by
\begin{align}
G(t,x,y)\coloneqq&\mu_+^{(x,y)}\Lambda\Big(\frac{|x|+|y|}{2\sqrt{it}}+\omega_+\sqrt{it}\Big)e^{-\frac{(|x|+|y|)^2}{4it}}+\mu_-^{(x,y)}\Lambda\Big(\frac{|x|+|y|}{2\sqrt{it}}+\omega_-\sqrt{it}\Big)e^{-\frac{(|x|+|y|)^2}{4it}} \notag \\
&+\frac{\mu_0^{(x,y)}}{2\sqrt{i\pi t}}e^{-\frac{(|x|+|y|)^2}{4it}}+\frac{1}{2\sqrt{i\pi t}}e^{-\frac{(x-y)^2}{4it}},\qquad t>0,\,x,y\in\dot{\mathbb{R}}. \label{Eq_G_point}
\end{align}
The values of the coefficients $\mu_\pm^{(x,y)}$, $\mu_0^{(x,y)}$ and $\omega_\pm$ will be specified in terms of the unitary matrix $J$ in the following. To do so, it is convenient to use
\begin{subequations}\label{Eq_eta}
\begin{align}
\eta^{(x,y)}&\coloneqq\frac{1}{\sqrt{1-\Re(\alpha)^2}}\left\{\begin{array}{ll} -\Im(\alpha), & \text{if }x,y>0, \\ -i\overline{\beta}, & \text{if }x>0,\,y<0, \\ i\beta, & \text{if }x<0,\,y>0, \\ \Im(\alpha), & \text{if }x,y<0, \end{array}\right.\qquad\text{if }|\Re(\alpha)|\neq 1, \label{Eq_eta1} \\
\eta^{(x,y)}&\coloneqq 0, \hspace{7.32cm}\text{if }|\Re(\alpha)|=1, \label{Eq_eta2}
\end{align}
\end{subequations}
and distinguish the following three cases.

\medskip

\textbf{Case I}: If $\Re(\alpha)\neq-\cos(\phi)$, then
\begin{equation*}
\omega_\pm=\frac{-\sin(\phi)\pm\sqrt{1-\Re(\alpha)^2}}{\cos(\phi)+\Re(\alpha)},\qquad\mu_\pm^{(x,y)}=\frac{\omega_\pm}{2}\big(\Theta(xy)+\eta^{(x,y)}\big),\qquad\mu_0^{(x,y)}=\sgn(xy).
\end{equation*}

\textbf{Case II}: If $\Re(\alpha)=-\cos(\phi)\neq -1$, then $\omega_-=\mu_-^{(x,y)}=0$ and
\begin{equation*}
\omega_+=\cot(\phi),\qquad\mu_+^{(x,y)}=-\frac{\omega_+}{2}\big(\Theta(xy)+\eta^{(x,y)}\big),\qquad\mu_0^{(x,y)}=\eta^{(x,y)}-\Theta(-xy).
\end{equation*}

\textbf{Case III}: If $\Re(\alpha)=-\cos(\phi)=-1$, then $\omega_\pm=\mu_\pm^{(x,y)}=0$ and $\mu_0^{(x,y)}=-1$.

\medskip

\begin{prop}
The Green's function \eqref{Eq_G_point} satisfies the Assumptions \ref{ass_Greensfunction} with respect to the potential $V(t,x)=0$ and the transition matrices $M=I-J$ and $N=i(I+J)$.
\end{prop}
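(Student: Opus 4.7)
The plan is to verify Assumption \ref{ass_Greensfunction} for the Green's function \eqref{Eq_G_point} by first extending $G(t,x,\cdot)$ from $\dot{\mathbb{R}}$ to a double sector $S_\alpha$, $\alpha\in(0,\tfrac{\pi}{2})$. The extension is defined by replacing $|y|$ with $z$ on $S_\alpha^+$ and with $-z$ on $S_\alpha^-$, and by retaining the sign-dependent coefficients $\mu_\pm^{(x,y)}$, $\mu_0^{(x,y)}$, $\eta^{(x,y)}$ as the constants inherited from $\sgn(y)=\pm1$ on each of the two sectors. Since $\Lambda$ and all exponentials in \eqref{Eq_G_point} are entire, the extension is continuous on $S_\alpha$ and holomorphic on each connected component of $\interior(S_\alpha)$. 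Along $S_\alpha$ the argument $\tfrac{|x|+|z|}{2\sqrt{it}}+\omega_\pm\sqrt{it}$ of $\Lambda$ has non-negative real part for $|z|$ large, which combined with the standard asymptotics $\Lambda(w)=\tfrac{1}{\sqrt{\pi}\,w}+O(1/w^3)$ as $\Re(w)\to\infty$ gives a uniform bound $|\Lambda(w)|\leq C/(1+|w|)$ used below.

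The analytic backbone is the functional equation $\Lambda'(w)=2w\Lambda(w)-\tfrac{2}{\sqrt{\pi}}$. Introducing the two building blocks
\begin{equation*}
v_\omega(t,\xi):=\Lambda\Bigl(\tfrac{\xi}{2\sqrt{it}}+\omega\sqrt{it}\Bigr)e^{-\xi^2/(4it)},\qquad g(t,\xi):=\tfrac{1}{\sqrt{i\pi t}}e^{-\xi^2/(4it)},
\end{equation*}
a direct differentiation yields the first-order identity $\partial_\xi v_\omega=\omega v_\omega-g$, from which $i\partial_t v_\omega=-\partial_\xi^2 v_\omega$ follows by one further differentiation; the corresponding identity for $g$ is classical. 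Since every summand of $G$ in \eqref{Eq_G_point} is of the form $c(x,y)\,v_{\omega_\pm}(t,|x|+|y|)$, $c(x,y)\,g(t,|x|+|y|)$, or the free propagator $\tfrac{1}{2\sqrt{i\pi t}}e^{-(x-y)^2/(4it)}$, property (i) of Assumption \ref{ass_Greensfunction} with $V=0$ holds on each half-line $\pm x>0$; the $\AC_{1,2}$-regularity is manifest from the explicit formulas.

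The main obstacle is the transmission condition \eqref{Eq_G_transmission}. Specializing to $x=0^\pm$ (where $|x|=0$ and $\partial_x|x|=\pm1$) expresses $G(t,0^\pm,y)$ and $G_x(t,0^\pm,y)$ as explicit linear combinations of $v_{\omega_\pm}(t,|y|)$ and $g(t,|y|)$, with coefficients depending on $\sgn(y)$ through $\mu_\pm^{(x,y)}$, $\mu_0^{(x,y)}$. Substituting into
\begin{equation*}
(I-J)\vvect{G(t,0^+,y)}{G(t,0^-,y)}-i(I+J)\vvect{G_x(t,0^+,y)}{-G_x(t,0^-,y)}
\end{equation*}
and using the parametrization \eqref{Eq_J} of $J$, the coefficient of each $v_{\omega_\pm}(t,|y|)$ reduces to a multiple of the quadratic $(\cos\phi+\Re(\alpha))\omega_\pm^2+2\sin\phi\,\omega_\pm-(\cos\phi-\Re(\alpha))$, which vanishes by the choice of $\omega_\pm$ in Case I, while the coefficient of $g(t,|y|)$ vanishes by the prescribed value of $\mu_0^{(x,y)}$. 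Cases II and III correspond to $\omega_-=0$ respectively $\omega_\pm=0$ and are the degenerate limits of the same computation. This case-by-case algebra mirrors the one performed in \cite[Section 3]{ABCS21_1} and I would import it from there after writing out the boundary values.

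Property (ii) is verified by splitting $G$ into its four summands: the fourth summand is the free Schrödinger propagator, so $\int\tfrac{1}{2\sqrt{i\pi t}}e^{-(x-y)^2/(4it)}\varphi(y)\,dy\to\varphi(x)$ as $t\to 0^+$; in the other three summands $|x|+|y|\geq|x|>0$, so the phase $e^{-(|x|+|y|)^2/(4it)}$ has no stationary point in $[-x_0,x_0]$ and one integration by parts in $y$ produces a factor $t$ that forces the integrals to vanish. For property (iii), choose $a(t)=\tfrac{1}{4t}$, so that $\widetilde{G}(t,x,z)=e^{-i(z-x)^2/(4t)}G(t,x,z)$; the fourth summand becomes the constant $\tfrac{1}{2\sqrt{i\pi t}}$, and in the first three the combined phase is $e^{i[(|x|+|z|)^2-(z-x)^2]/(4t)}$, whose imaginary exponent is non-positive on each of the four sign-combinations of $(\sgn(x),\sgn(\Re z))$ allowed by $S_\alpha$, so that the phase has modulus at most $1$. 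Combined with $|\Lambda(w)|\leq C/(1+|w|)$ this yields $|\widetilde{G}(t,x,z)|\leq A_0(t,x)e^{B_0(t,x)|z|^p}$ with $A_0(t,x)=O(1/\sqrt{t})$, $B_0=0$, both independent of $x$ and hence trivially bounded as $x\to 0^\pm$ and satisfying \eqref{Eq_Coefficient_extension}. The derivative bounds \eqref{Eq_Gtildex_estimate}, \eqref{Eq_Gtilde_derivative_estimate} follow by applying the same analysis to $\partial_x\widetilde{G}$, $\partial_x^2\widetilde{G}$, $\partial_t\widetilde{G}$, which are finite sums of products of the building blocks $v_\omega$, $g$, $\Lambda'$ and polynomials in $z$ and $x$.
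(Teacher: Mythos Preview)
Your proposal is correct and follows essentially the same route as the paper: decompose $G$ into the building blocks $G_0$, $G_1(\,\cdot\,;\omega_\pm)$ and $G_{\text{free}}$, verify the free Schr\"odinger equation and the estimates \eqref{Eq_Gtilde_estimates} for each block, and handle the initial condition by treating the free propagator separately and showing the remaining terms vanish via a non-stationary phase/IBP argument. The paper simply outsources the Schr\"odinger equation, the transmission identity \eqref{Eq_G_transmission} and the $\widetilde G$--bounds to \cite[Lemmas~2.2,~2.3 and Theorem~2.4]{ABCS21_1}, whereas you sketch these computations directly; the substance is the same.

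Two small points of precision are worth tightening. First, in property~(ii) your phrase ``produces a factor $t$'' is not quite accurate for the $G_0$--term: the prefactor $1/\sqrt{t}$ leaves only a net $\sqrt{t}$ after one integration by parts, and for the $G_1$--term the $y$--derivative also hits $\Lambda$, so you need the additional input $\Lambda'(w)=O(1/w)$ (equivalently $|\Lambda|=O(\sqrt t)$ here) to close the argument; the paper does this explicitly. Second, the bound $|\Lambda(w)|\leq C/(1+|w|)$ is only valid on a half-plane $\Re(w)\geq -c$, so for $\omega_\pm<0$ and small $|x|+|z|$ the argument of $\Lambda$ may leave this region; you correctly note this is only an issue on a bounded set, but the clean uniform statement the paper quotes from \cite{ABCS21_1} is $|\widetilde G_1(t,x,z;\omega)|\leq\Lambda\bigl(\omega\sqrt{t}/\sqrt{2}\bigr)$, which is $z$--independent and avoids the case distinction.
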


\begin{proof}
First of all, it is already shown \cite[Theorem 2.4]{ABCS21_1} that $G$ satisfies the transmission condition \eqref{Eq_G_transmission}.
Moreover, the Green's function \eqref{Eq_G_point} holomorphically extends to $\Set{z\in\mathbb{C} | \Re(z)\neq 0}$ by
\begin{align}
G(t,x,z)\coloneqq&\mu_+^{(x,\pm)}G_1(t,x,\pm z)+\mu_-^{(x,\pm)}G_1(t,x,\pm z) \notag \\
&+\mu_0^{(x,\pm)}G_0(t,x,\pm z)+G_\text{free}(t,x,z),\qquad\pm\Re(z)>0, \label{Eq_G_decomposition_point}
\end{align}
using the functions
\begin{subequations}\label{Eq_G01free}
\begin{align}
G_0(t,x,z)&=\frac{1}{2\sqrt{i\pi t}}e^{-\frac{(|x|+z)^2}{4it}}, && \Re(z)>0, \label{Eq_G0} \\
G_1(t,x,z;\omega)&\coloneqq\Lambda\Big(\frac{|x|+z}{2\sqrt{it}}+\omega\sqrt{it}\Big)e^{\frac{(|x|+z)^2}{4it}}, && \Re(z)>0, \label{Eq_G1} \\
G_\text{free}(t,x,z)&\coloneqq\frac{1}{2\sqrt{i\pi t}}e^{-\frac{(x-z)^2}{4it}}, && z\in\mathbb{C}. \label{Eq_Gfree}
\end{align}
\end{subequations}
It is already proven in \cite[Lemma 2.2]{ABCS21_1} that all of the three functions \eqref{Eq_G01free} satisfy the Schrödinger equation \eqref{Eq_G_Schroedinger}, and consequently so does $G$, since the coefficients $\mu_+^{(x,\pm)}$, $\mu_-^{(x,\pm)}$, $\mu_0^{(x,\pm)}$ only depend on $\sgn(x)$. Also for the initial condition \eqref{Eq_G_initial} it is sufficient to consider the functions \eqref{Eq_G01free} separately. Fix $x\in\dot{\mathbb{R}}$ choose $x_0>|x|$ arbitrary and $\varphi\in\mathcal{C}^\infty([-x_0,x_0])$. For $G_\text{free}$, we can use the derivative $\frac{d}{dw}\erf(w)=\frac{2}{\sqrt{\pi}}e^{-w^2}$ and apply integration by parts to write the integral as
\begin{align*}
\int_{-x_0}^{x_0}G_\text{free}(t,x,y)\varphi(y)dy&=-\frac{1}{2}\int_{-x_0}^{x_0}\frac{d}{dy}\erf\Big(\frac{x-y}{2\sqrt{it}}\Big)\varphi(y)dy \\
&\hspace{-2cm}=\frac{1}{2}\erf\Big(\frac{x+x_0}{2\sqrt{it}}\Big)\varphi(-x_0)-\frac{1}{2}\erf\Big(\frac{x-x_0}{2\sqrt{it}}\Big)\varphi(x_0)+\frac{1}{2}\int_{-x_0}^{x_0}\erf\Big(\frac{x-y}{2\sqrt{it}}\Big)\varphi'(y)dy.
\end{align*}
Applying the limit $t\rightarrow 0^+$ and using $\lim_{w\rightarrow\pm\infty}\erf(\frac{w}{\sqrt{i}})=\pm 1$, we get the initial value
\begin{equation}\label{Eq_Gfree_initial}
\lim\limits_{t\rightarrow 0^+}\int_{-x_0}^{x_0}G_\text{free}(t,x,y)\varphi(y)dy=\frac{\varphi(-x_0)+\varphi(x_0)}{2}+\frac{1}{2}\int_{-x_0}^{x_0}\sgn(x-y)\varphi'(y)dy=\varphi(x).
\end{equation}
For the function $G_1$ we get
\begin{align*}
\int_{-x_0}^{x_0}G_1(t,x,y;\omega)\varphi(y)dy&=\frac{1}{2}\int_{-x_0}^{x_0}\sgn(y)\Lambda\Big(\frac{|x|+|y|}{2\sqrt{it}}+\omega\sqrt{it}\Big)\varphi(y)\frac{d}{dy}\erf\Big(\frac{|x|+|y|}{2\sqrt{it}}\Big)dy \\
&=-\frac{1}{2}\erf\Big(\frac{|x|+|y|}{2\sqrt{it}}\Big)\Lambda\Big(\frac{|x|+|y|}{2\sqrt{it}}+\omega\sqrt{it}\Big)\varphi(y)\Big|_{y=-x_0}^0 \\
&\quad+\frac{1}{2}\erf\Big(\frac{|x|+|y|}{2\sqrt{it}}\Big)\Lambda\Big(\frac{|x|+|y|}{2\sqrt{it}}+\omega\sqrt{it}\Big)\varphi(y)\Big|_{y=0}^{x_0} \\
&\quad-\frac{1}{2}\int_{-x_0}^{x_0}\sgn(y)\erf\Big(\frac{|x|+|y|}{2\sqrt{it}}\Big)\frac{d}{dy}\Big(\Lambda\Big(\frac{|x|+|y|}{2\sqrt{it}}+\omega\sqrt{it}\Big)\varphi(y)\Big)dy.
\end{align*}
From \cite[Lemma 3.1]{ABCS19} we know that $\Lambda(z)=\mathcal{O}\big(\frac{1}{z}\big)$ and also $\Lambda'(z)=\mathcal{O}\big(\frac{1}{z}\big)$, as $z\rightarrow\infty$, $\Re(z)>0$. Hence all terms in the above representation vanish in the limit $t\rightarrow 0^+$, which gives the initial value
\begin{equation}\label{Eq_G1_initial}
\lim\limits_{t\rightarrow 0^+}\int_{-x_0}^{x_0}G_1(t,x,y;\omega)\varphi(y)dy=0.
\end{equation}
The same calculations with the absence of the function $\Lambda$ also leads to the initial value
\begin{equation}\label{Eq_G0_initial}
\lim\limits_{t\rightarrow 0^+}\int_{-x_0}^{x_0}G_0(t,x,y)\varphi(y)dy=0.
\end{equation}
Using now the three limits \eqref{Eq_Gfree_initial}, \eqref{Eq_G1_initial} and \eqref{Eq_G0_initial} in the decomposition \eqref{Eq_G_decomposition_point} gives the initial value \eqref{Eq_G_initial}.

\medskip

In order to derive the estimate \eqref{Eq_Gtilde_estimate}, it is sufficient to do these estimates for the three functions \eqref{Eq_G01free}, or rather for their reduced representations
\begin{equation*}
\widetilde{G}_j(t,x,z)\coloneqq e^{\frac{(x-z)^2}{4it}}G_j(t,x,z),\qquad j\in\{0,1,\text{free}\},
\end{equation*}
separately. However, this is already done in \cite[Lemma 2.3]{ABCS21_1}, and we conlude the estimates
\begin{align*}
|\widetilde{G}_0(t,x,z)|&\leq\frac{1}{2\sqrt{\pi t}}, && z\in S_\alpha^+, \\
|\widetilde{G}_1(t,x,z;\omega)|&\leq\Lambda\Big(\frac{\omega\sqrt{t}}{\sqrt{2}}\Big), && z\in S_\alpha^+, \\
|\widetilde{G}_\text{free}(t,x,z)|&=\frac{1}{2\sqrt{i\pi t}}, && z\in S_\alpha,
\end{align*}
Hence the bound \eqref{Eq_Gtilde_estimate} es obviously satisfied with coefficients $A_0$ and $B_0$, which are bounded as $x\rightarrow 0^\pm$ and also satisfy the bounds \eqref{Eq_Coefficient_extension} in the limit $t\rightarrow 0^+$. Moreover, the estimates of the first spatial derivatives can also be concluded from \cite[Lemma 2.2 \& Lemma 2.3]{ABCS21_1} and are given by
\begin{align*}
\Big|\frac{\partial}{\partial x}\widetilde{G}_0(t,x,z)\Big|&\leq\frac{|z|}{2\sqrt{\pi}\,t^{\frac{3}{2}}}, \\
\Big|\frac{\partial}{\partial x}\widetilde{G}_1(t,x,z;\omega)\Big|&\leq\Big(\frac{|x|+|z|}{2t}+|\omega|\Big)\Lambda\Big(\frac{\omega\sqrt{t}}{\sqrt{2}}\Big)+\frac{1}{\sqrt{\pi t}}, \\
\Big|\frac{\partial}{\partial x}\widetilde{G}_\text{free}(t,x,z)\Big|&=0.
\end{align*}
If we additionally use that $|z|\leq e^{|z|-1}$, the estimate \eqref{Eq_Gtildex_estimate} is satisfied with coefficients $A_1$ and $B_1$, which are bounded as $x\rightarrow 0^\pm$. Finally, also the exponential bounds of the second spatial and the time derivatives \eqref{Eq_Gtilde_derivative_estimate} of $\widetilde{G}_i$, $i\in\{0,1,\text{free}\}$ follow from \cite[Lemma 2.2 \& Lemma 2.3]{ABCS21_1}.
\end{proof}

\end{document}